\newtheorem{theorem}{Theorem}
\newtheorem{definition}{Definition}
\newtheorem{example}{Example}
\newtheorem{remark}{Remark}
\definecolor{light-gray}{gray}{0.90}
\newcounter{theorem-backup}
\newcommand{\Z}{\mathbb{Z}}
\newcommand{\R}{\mathbb{R}}
\newcommand{\B}{\mathbb{B}}
\lstdefinelanguage{PseudoC}[ISO]{C++} { 
morekeywords={foreach, and, not, or, is, FIFO_Queue, HashTable, FILE, Cache}, 
morekeywords=[2]{HashInsert, Enqueue, Dequeue, next}, 
mathescape=true,
alsoletter={'},
deletestring=[b]'
}
\lstdefinelanguage{PseudoCWithNumbers}[ISO]{C++} { 
morekeywords={foreach, and, not, or, is, FIFO_Queue, HashTable, FILE, Cache}, 
morekeywords=[2]{HashInsert, Enqueue, Dequeue, next}, 
mathescape=true,
alsoletter={'},
deletestring=[b]'
}
\lstdefinelanguage{Murphi}[]{Pascal} { 
morekeywords={ruleset, rule, invariant, startstate, return, endif, endfor, endswitch, forall, endforall, exists, endexists}, 
mathescape=true, 
morestring=[b]",
morestring=[b]', 
morecomment=[s]{/*}{*/} ,
morecomment=[l]{--}
}
\lstdefinelanguage{PRISM}[ISO]{C++} { 
morekeywords={probabilistic, stochastic, const, rate, module, endmodule, init, P, U},
mathescape=true, 
alsoletter={'},
deletestring=[b]'
}
\lstdefinelanguage{Yacc}[ISO]{C++} { 
morekeywords={token, left}, 
mathescape=false,
alsoletter={'},
deletestring=[b]'
}
\lstdefinestyle{PseudoC}{
language=PseudoC,
basicstyle=\ttfamily,
tabsize=1,
showlines=false,
emptylines=*1,
breaklines=true,
breakindent=5pt,
keywordstyle=\rmfamily\bfseries,
keywordstyle=[2]\rmfamily,
commentstyle=\itshape, 
columns=fixed,
showspaces=false, 
showstringspaces=false, 
showtabs=false, 
escapechar=\%
}
\lstdefinestyle{PseudoCWithNumbers}{
language=PseudoC,
basicstyle=\ttfamily,
tabsize=1,
showlines=false,
emptylines=*1,
breaklines=true,
breakindent=5pt,
keywordstyle=\rmfamily\bfseries,
keywordstyle=[2]\rmfamily,
commentstyle=\itshape, 
columns=fixed,
showspaces=false, 
showstringspaces=false, 
showtabs=false, 
escapechar=\%,
numbersep=5pt,framexleftmargin=15pt,numbers=left,
}
\lstdefinestyle{Murphi}{
language=Murphi,
basicstyle=\ttfamily,
tabsize=1,
showlines=false,
emptylines=*1,
breaklines=true,
breakindent=5pt,
keywordstyle=\rmfamily\bfseries,
commentstyle=\itshape, 
columns=fixed,
showspaces=false, 
showstringspaces=false, 
showtabs=false,
escapechar=\%}
\lstdefinestyle{PRISM}{
language=PRISM,
basicstyle=\ttfamily,
tabsize=1,
showlines=false,
emptylines=*1,
breaklines=true,
breakindent=5pt,
keywordstyle=\rmfamily\bfseries,
commentstyle=\itshape, 
columns=fixed,
showspaces=false, 
showstringspaces=false, 
showtabs=false,
escapechar=\%}
\lstdefinestyle{Yacc}{
language=PseudoC,
basicstyle=\ttfamily,
tabsize=1,
showlines=false,
emptylines=*1,
breaklines=true,
breakindent=5pt,
keywordstyle=\rmfamily\bfseries,
keywordstyle=[2]\rmfamily,
commentstyle=\itshape, 
columns=fixed,
showspaces=false, 
showstringspaces=false, 
showtabs=false, 
}
\newcommand{\qks}{\mbox{\sl QKS}}
\newcommand{\qkssc}{\mbox{\sl QKS}$^{\,\sf sc}$}
\newcommand{\fun}[1]{{\textsl{#1}}}
\definecolor{Blue}{rgb}{0.25,0.33,0.77}
\definecolor{Red}{rgb}{1,0,0}
\definecolor{Black}{rgb}{0,0,0}
\definecolor{Green}{rgb}{0,1,0}
\definecolor{light-gray}{gray}{0.90}
\title{On Model Based Synthesis of Embedded Control Software}
\author{Vadim Alimguzhin, Federico Mari, Igor Melatti, Ivano Salvo, Enrico Tronci\\
\small \itshape Department of Computer Science\\
\small \itshape Sapienza University of Rome\\
\small \itshape via Salaria 113, 00198 Rome\\
\small email: \{alimghuzin,mari,melatti,salvo,tronci\}@di.uniroma1.it\\
\small Vadim Alimghuzin is also with the Department of Computer Science and Robotics\\
\small Ufa State Aviation Technical University\\
\small 12 Karl Marx Street, Ufa, 450000, Russian Federation}
\begin{document}

\maketitle

  \begin{abstract}
Many \emph{Embedded Systems} are indeed 
\emph{Software Based Control Systems} (SBCSs), that
is control systems whose controller consists of control software
running on a  microcontroller device.  This motivates investigation on
\emph{Formal Model Based Design} approaches for control software.
Given the formal model of a plant as a {\em Discrete Time Linear Hybrid System} and 
the implementation specifications 
(that is, number of bits in the \emph{Analog-to-Digital} (AD) conversion) 
correct-by-construction control software 
can be automatically generated from System Level Formal Specifications 
of the closed loop system (that is, \emph{safety} and \emph{liveness} requirements), 
by computing a suitable finite abstraction 
of the plant.

With respect to given implementation specifications, 
the automatically generated code implements a time optimal control strategy 
(in terms of set-up time),  
has a \emph{Worst Case Execution Time} 
linear in the number of AD bits $b$, but unfortunately, its size 
grows exponentially with respect to $b$. 
In many embedded systems, there are severe restrictions on the computational resources 
(such as memory or computational power) 
available to microcontroller devices.

This paper addresses model based synthesis of control software by trading
system level non-functional requirements (such us optimal set-up time, ripple) 
with software non-functional requirements (its footprint). 
Our experimental results 
show the effectiveness of our approach: for the inverted pendulum benchmark, 
by using a quantization schema with 12 bits, 
the size of the small controller is less than 6\% of the size of the time optimal one.
\end{abstract}

\clearpage
 
  \section{Introduction}

        Many \emph{Embedded Systems} are indeed \emph{Software Based Control Systems} (SBCSs).
An SBCS 
consists of two main subsystems, 
the \emph{controller} and the \emph{plant}, 
that form a {\em closed loop system}.
Typically, the plant is a physical system
consisting, for example, of mechanical or electrical devices
whereas the controller 
consists of \emph{control software} running on a microcontroller.
%
Software generation from models
and formal specifications forms the core of
\emph{Model Based Design} of embedded software 
\cite{Henzinger-Sifakis-fm06}.
This approach is particularly interesting for SBCSs
since in such a case {\em System Level Formal Specifications} are much easier
to define than the control software behavior itself. 
%
The typical control loop skeleton for an SBCS is the following.
Measure $x$ of the system state from plant \emph{sensors} go through an
\emph{analog-to-digital} (AD) conversion, yielding a {\em quantized} value $\hat{x}$. 
A function \fun{ctrlRegion} checks if $\hat{x}$ belongs to the region in which 
the control software works correctly. 
If this is not the case a \emph{Fault Detection, Isolation and 
Recovery} (FDIR) procedure is triggered, otherwise a function \fun{ctrlLaw} computes a 
command $\hat{u}$ to be sent to plant \emph{actuators} after a \emph{digital-to-analog} (DA) 
conversion.
Basically, the control software design problem for SBCSs consists
in designing software implementing functions
\fun{ctrlLaw} and
\fun{ctrlRegion} in such a way that the 
closed loop system 
meets given
\emph{safety} and \emph{liveness} specifications.



For SBCSs, system level specifications are typically
given with respect to the desired behavior 
of the closed loop system. The control software 
is designed using a 
\emph{separation-of-concerns} approach. That is, 
\emph{Control Engineering} techniques 
(e.g., see \cite{modern-control-theory-1990})
are used
to design, from the closed loop system level specifications,
\emph{functional specifications} (\emph{control law})
for the control software whereas 
\emph{Software Engineering} techniques are used to
design control software implementing the given
functional specifications.
Such a separation-of-concerns approach has 
several
drawbacks. 

First, usually control engineering techniques do not
yield a formally verified specification for the
control law 
when quantization is taken into account. 
This is particularly the case when the plant has to be modelled 
as a \emph{Hybrid System}, 
that is a system with continuous as well as discrete state changes
\cite{alg-hs-tcs95,HHT96,AHH96}.
As a result, even if the control software meets its functional
specifications there is no formal guarantee that
system level specifications are met since
quantization effects are not formally accounted for.

Second, issues concerning computational resources, 
such as control software \emph{Worst Case Execution Time} (WCET),
can only be considered very late in the SBCS design activity, namely once the 
software has been designed. As a result, the control software may have a WCET
greater than the sampling time. 
This invalidates the schedulability
analysis (typically carried out before the control software is completed)
and may trigger redesign of the software or even 
of its functional specifications
(in order to simplify its design).

Last, but not least, the classical 
separation-of-concerns approach does not effectively support 
design space exploration for the control software. 
In fact, although in general there will be many functional specifications
for the control software that will allow meeting the given
system level specifications, the software engineer only
gets one 
 to play with. This overconstrains a priori 
 the design space
for the control software implementation preventing, for example, effective performance trading 
(e.g., 
between number of bits in AD conversion,  
WCET, 
RAM usage, 
CPU power consumption, 
etc.).


\subsection{Motivations}

The previous considerations motivate research on 
Software Engineering methods and tools focusing on control software synthesis
rather than on control law 
as in Control Engineering. 
The objective is that from the plant model (as a hybrid system), 
from formal specifications for the closed loop system behavior 
and from \emph{Implementation Specifications} 
(that is, the number of bits used in the quantization process)
such methods and tools can generate correct-by-construction control software
satisfying the given specifications.

A {\em Discrete Time Linear Hybrid System} (DTLHS) is a discrete time hybrid system
whose dynamics is modeled as a {\em linear predicate} 
over a set of continuous as well as discrete 
variables that describe system state, system inputs and disturbances.
System level safety as well as liveness specifications are modeled as sets of states
defined, in turn, as predicates. 
By adapting the proofs in \cite{decidability-hybrid-automata-jcss98} 
 for the reachability problem in dense time hybrid systems, 
 it has been shown that the control synthesis problem 
 is undecidable for DTLHSs~\cite{ictac2012}.  
Despite that,  
 non complete or semi-algorithms usually succeed in finding controllers for 
 meaningful hybrid systems.
 
The tool \qks\  \cite{qks-cav10} automatically synthesises control software  
starting from a plant model given as a DTLHS,  
the number of bits for AD conversion, and   
System Level Formal Specifications of the closed loop system. 
The generated code, however, may be very large, since it grows exponentially with 
the number of bits of the quantization schema \cite{icsea2011}. 
On the other hand, controllers synthesised by considering a 
finer quantization schema usually have a better behaviour with respect to 
many other non-functional requirements, such as {\em ripple} and 
{\em set-up time}.
Typically, a microcontroller device in an Embedded System  
has limited resources in terms of computational power and/or memory. 
Current state-of-the-art microcontrollers have up to 512Kb of memory, and other 
design constraints (mainly costs) may impose to use even less powerful devices.
As we will see in Sect.~\ref{sec:expres}, 
by considering a quantization schema with 12 bits on the inverted pendulum system, 
\qks\  generates a controller which has a size greater than 8Mbytes.

This paper addresses model based synthesis of control software by trading
system level non-functional requirements 
with software non-functional requirements. Namely,   
we aim at reducing the code footprint, possibly at the cost of having a 
suboptimal set-up time and ripple. 

\subsection{Our Main Contributions}
Fig.~\ref{fig:cssf} shows the model based control software synthesis flow that we consider in 
this paper. 
A specification consists of a plant model, given as a DTLHS, 
System Level Formal Specifications that describe functional requirements
of the closed loop system, and Implementation Specifications that describe 
non functional requirements of the control software, such as the number of bits 
used in the quantization process, the required WCET, etc.
In order to generate the control software, the tool \qks\ takes the following steps. 
First (step 1), a suitable finite discrete abstraction 
({\em control abstraction}~\cite{qks-cav10}) $\hat{\cal H}$ of the DTLHS plant model ${\cal H}$ is computed;  
$\hat{\cal H}$ depends on the quantization schema and it is the plant as it can be seen from the control software after 
AD conversion.
Then (step 2), given an abstraction $\hat{G}$ of the goal states $G$, it computes  
a controller $\hat{K}$ that starting from any initial abstract state,
 drives $\hat{\cal H}$ to $\hat{G}$ regardless of possible nondeterminism.
Control abstraction properties ensure that $\hat{K}$ 
is indeed a (quantized representation of a) controller for the original plant ${\cal H}$.
Finally (step 3), the finite automaton $\hat{K}$ is translated into control software (C code).  
Besides meeting functional specifications, the generated control software meets some 
non functional requirements: it implements a (near) time-optimal control strategy, 
and it has a WCET guaranteed to be linear in the number of bits of the 
quantization schema.

\begin{figure}
  \centering
  \hspace{-.6cm}\includegraphics[width=0.8\columnwidth]{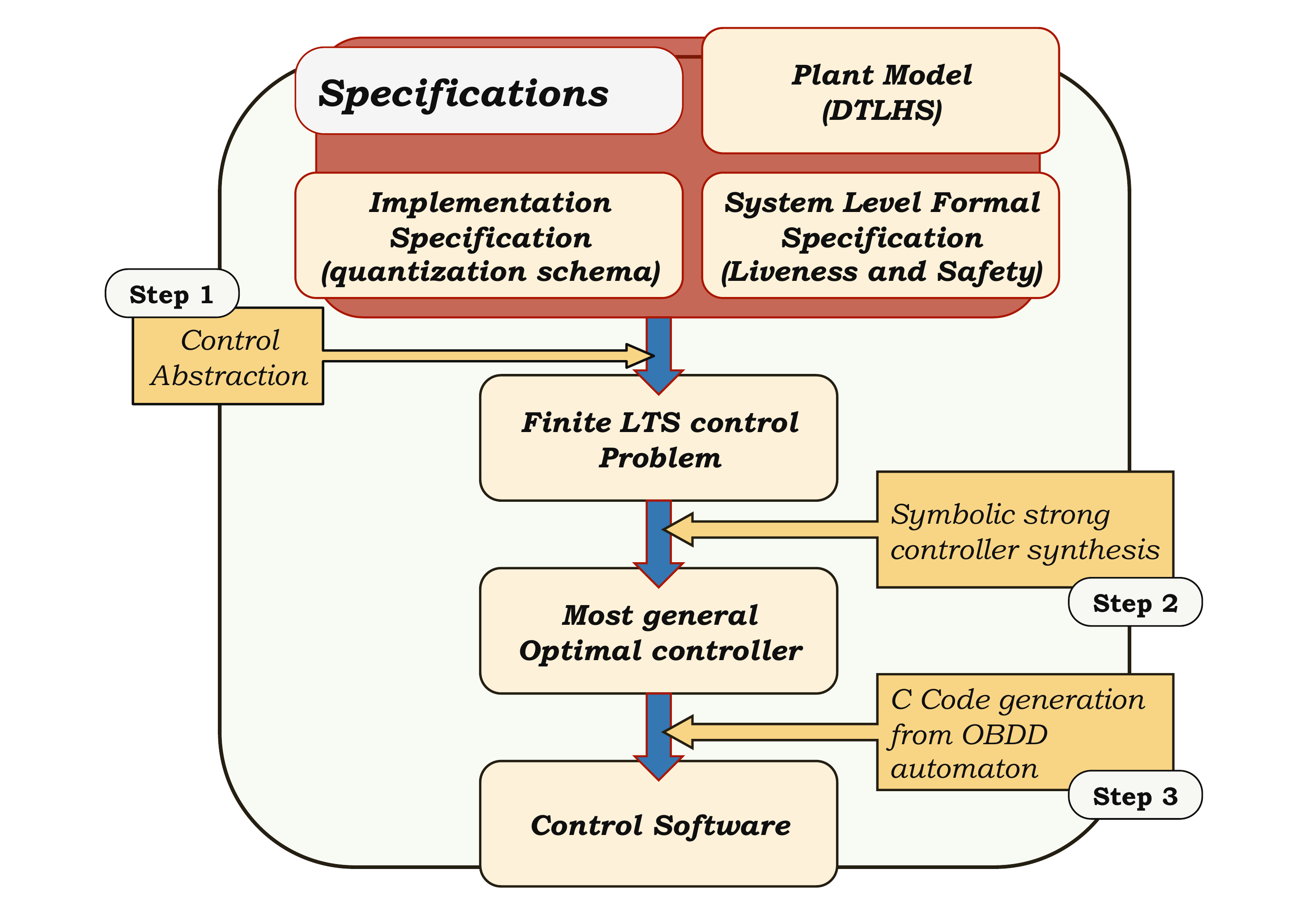}
  \caption{Control Software Synthesis Flow.}
  \label{fig:cssf}
\end{figure}

To find the quantized controller $\hat{K}$, \qks\ implements 
the symbolic synthesis algorithm in \cite{strong-planning-98}, 
based on {\em Ordered Bynary Decision Diagrams} (OBDDs) manipulation.  
This algorithm finds 
a time-optimal solution, 
i.e. the controller 
$\hat{K}$ drives the system $\hat{\cal H}$ to $\hat{G}$ always along shortest paths. 
The finer the control abstraction is (i.e. when the quantization schema is more precise), 
the better is the control strategy found. Unfortunately, such time optimal control strategies 
may lead to very large controllers in terms of the size of the generated C control software.

Driven by the intuition that by enabling the very same action on 
large regions of the state space we may decrease the control software size,  
we design a controller synthesis algorithm 
(Alg.~\ref{alg:smallctr4} in Sect.~\ref{sec:control-synthesis-algo})
that gives up optimality and looks for maximal regions that can be controlled by performing the same action. 
We formally prove its correctness 
and completeness (Theor.~\ref{thm:correctness} and~\ref{thm:maximal} in Sect.~\ref{sec:algo-correctness}).

Experimental results in Sect.~\ref{sec:expres} 
show that such a heuristic effectively mitigates the exponential growth
of the controller size without having a significant impact on non-functional system level 
requirements such as set-up time and ripple. 
We accomplish this result without changing the WCET of the synthesized control software.  
For the inverted pendulum benchmark, by using a quantization schema with 12 bits, 
the size of our controller is less than $6\%$ of the size of the time optimal controller.


\subsection{Related Work}
\label{related-works.tex}
Control Engineering has been studying
control law design (e.g., optimal control, robust control, etc.),
for more than half a century
(e.g., see~\cite{modern-control-theory-1990}).
Also \emph{Quantized Feedback Control} 
has been widely studied in control engineering
(e.g. see \cite{quantized-ctr-tac05}).
However such research does not address hybrid systems 
and, as explained above, focuses on control law design rather
than on control software synthesis. 
Traditionally, control engineering approaches
model \emph{quantization errors} as statistical \emph{noise}. 
As a result, correctness of the control law holds in a probabilistic sense.
Here instead, we model quantization errors as nondeterministic 
(\emph{malicious})
\emph{disturbances}. This guarantees system level correctness of the generated
control software (not just that of the control law) 
with respect to \emph{any} possible sequence of quantization errors.

Formal verification of 
\emph{Linear Hybrid Automata} (LHA)
\cite{alg-hs-tcs95} 
has been investigated in
\cite{HHT96,phaver-sttt08,ctr-lha-cdc97,TLS99}. 
Quantization can be seen as a sort of abstraction.
In a hybrid systems formal verification context, abstractions 
has been widely studied 
(e.g., see \cite{AHLP-ieee00,pred-abs-tecs06}), 
to ease the verification task. On the other hand, 
in control software synthesis,  
quantization is a design requirement since it models a hardware component
(AD converter) which is part of the specification of the control software synthesis problem.
As a result, 
clever abstractions considered 
in a verification setting cannot be directly used in our synthesis setting where quantization is given.

The abstraction--based approach to controller synthesis has also been broadly 
investigated.
Based on a notion of suitable finite state abstraction 
(e.g. see \cite{pola-girard-tabuada-2008}) 
control software synthesis for continuous time linear systems
(no switching) has been implemented in the tool 
{\sc Pessoa} \cite{pessoa-cav10}. 
On the same wavelength, \cite{Yordanov:2012ul} generates a
control strategy from a finite abstraction of a  
\emph{Piecewise Affine Discrete Time Hybrid System} (PWA-DTHS). 
Also the Hybrid Toolbox \cite{HybTBX} considers PWA-DTHSs. 
Such tools output a feedback control law that is 
then passed to Matlab in order to generate control software.
Finite horizon control of PWA-DTHSs has been studied using
a MILP based approach (e.g. see \cite{sat-opt-ctr-hscc04}).
Explicit finite horizon control synthesis algorithms for discrete time 
(possibly non-linear) hybrid systems have been studied in 
\cite{icar08}. 
All such approaches do not account for state feedback 
quantization since they all assume \emph{exact} (i.e. real valued) state measures.
Optimal switching logic, i.e. 
synthesis of optimal controllers with respect to some cost function 
has also been widely investigated 
(e.g. see \cite{jha-emsoft11}). 
In this paper, we focus on non-functional sofware requirements 
rather than non-functional system-level requirements.

Summing up, to the best of our knowledge, no previously published
result is available about model based synthesis of small footprint control software 
from a plant model, system level specifications and implementation specifications.

\section{Control Software Synthesis} 
To make this paper self-contained, 
first we briefly summarize previous work on automatic generation of 
control software for {\em Discrete Time Linear Hybrid Systems} (DTLHSs) 
from System Level Formal Specifications. We focus on 
basic definitions and mathematical tools 
that will be useful later. 

We model the controlled system (i.e. the plant) as a DTLHS (Sect.~\ref{sec:dtlhs}), 
that is a discrete time hybrid system 
whose dynamics is modeled as a {\em linear predicate} (Sect.~\ref{subsection:predicates})  
over a set of continuous as well as discrete 
variables.
The semantics of a DTLHS is given in terms of a {\em Labeled Transition Systems} 
(LTSs, Sect.~\ref{sec:lts}).

Given a plant ${\cal H}$ modeled as a DTLHS, a set of {\em goal states} $G$ 
({\em liveness specifications}) and an {\em initial region} $I$, 
both represented as linear predicates, 
we are interested in finding a {\em restriction $K$ of the behaviour} 
of ${\cal H}$ such that in the {\em closed loop system} 
all paths starting in $I$ lead to $G$ after a finite number of steps. 
Moreover, we are interested in controllers that take their decisions by looking at 
{\em quantized states}, i.e. the values that the control software 
reads after an AD conversion. This is the {\em quantized control problem} 
(Sect.~\ref{sec:qcp}).

\sloppy 

The quantized controller is computed by solving an {\em LTS control problem} 
(Sect.~\ref{sec:lts-ctr}), by using a symbolic approach based on {\em Ordered Binary Decision Diagrams} (OBDDs) (Sect.~\ref{sec:ctr-syn}). Finally, we briefly describe how C control software is automatically generated from the OBDD controller representation (Sect.~\ref{sec:ctr-sftw-gen}).   

\fussy



  		\subsection{Predicates}
\label{subsection:predicates}
We denote with $[n]$ an initial segment $\{1,\ldots, n\}$ of the natural numbers. 
We denote with $X$ = $[x_1, \ldots, x_n]$ a
finite sequence of distinct variables, that we may regard, when convenient, as a set.
Each variable $x$ ranges on a known (bounded or unbounded)
interval ${\cal D}_x$ either of the reals or of the integers (discrete
variables). Boolean variables are discrete variables
ranging on the set $\B$ = \{0, 1\}. 
We denote with ${\cal D}_X$ the set $\prod_{x\in X} {\cal D}_x$.
To clarify that a variable $x$ is {\em continuous} 
(resp. discrete, boolean)
we may write $x^{r}$ (resp. $x^{d}$, $x^{b}$). 
Analogously $X^{r}$ ($X^{d}$, $X^{b}$) denotes the sequence
of real (integer, boolean) variables in $X$.
Unless otherwise stated, we suppose
${\cal D}_{X^r} = \R^{|X^r|}$ and ${\cal D}_{X^d} = \Z^{|X^d|}$. 
Finally, if $x$ is a boolean variable 
we write $\bar{x}$ for $(1 - x)$.

 
A {\em linear expression} $L(X)$ over a list of variables $X$ is a linear
combination of variables in $X$ with rational coefficients. 
%
A {\em linear constraint} over $X$ (or simply a {\em constraint})  is an expression of the form
$L(X) \leq b$,
where 
$b$ is a rational constant. In the following, we also write $L(X) \geq b$
for $-L(X) \leq -b$, $L(X) = b$ for ($L(X) \leq b$) $\wedge$ ($L(X) \geq b$), 
and $a \leq x \leq b$ for $x \geq a \;\land\; x \leq
b$.

{\em Predicates} are inductively defined as follows.
A constraint $C(X)$ over a list of variables $X$ is a predicate over 
$X$. 
If $A(X)$ and $B(X)$ are predicates over $X$, then $(A(X) \land B(X))$
and $(A(X) \lor B(X))$ are predicates over X.  Parentheses may be
omitted, assuming usual associativity and precedence rules of logical
operators.
A {\em conjunctive predicate} is a conjunction of constraints.


A {\em valuation} over a list of variables $X$
is a function $v$ that maps each variable $x \in X$ to a value $v(x)
\in {\cal D}_x$.
Given a valuation $v$, we denote with $X^\ast\in {\cal D}_X$ the sequence of values 
$[v(x_1),\ldots,v(x_n)]$. 
We also call valuation the sequence of values $X^\ast$.
A \emph{satisfying assignment} to a predicate $P(X)$ is a
valuation $X^{*}$ such that $P(X^{*})$ holds. If a satisfying assignment to a
predicate $P$ over $X$ exists, we say that $P$ is {\em feasible}.
Abusing notation, we may
denote with $P$ the set of satisfying assignments to the predicate 
$P(X)$. 

Two predicates $P$ and $Q$ over $X$ are {\em equivalent},
denoted by $P\equiv Q$, if they have the same set of 
satisfying assignments. Two predicates $P(X)$ and $Q(Z)$ are 
{\em equisatisfiable}, notation $P\simeq Q$ if 
$P$ is satisfiable if and only if $Q$ is satisfiable.
A variable $x\in X$ 
is said to be {\em bounded} in $P$ if 
there exist $a$, $b \in {\cal D}_x$
such that $P(X)$ implies $a \leq x \leq b$.
A predicate is bounded if all its variables are bounded.
%
%
%
%
%

Given a constraint $C(X)$ and a fresh boolean variable ({\em guard}) $y \not\in X$,
the {\em guarded constraint} $y \to C(X)$ (if $y$ then $C(X)$) denotes
the predicate $(y = 0) \lor C(X)$. Similarly, we use $\bar{y} \to
C(X)$ (if not $y$ then $C(X)$) to denote the predicate $(y = 1) \lor
C(X)$.
A {\em guarded predicate} is a conjunction of 
either constraints or guarded constraints.
It is possible to show that, 
if a guarded predicate $P$ is bounded,
then $P$ can be transformed into an equisatisfiable conjunctive predicate. 

\subsection{Labeled Transition Systems}
\label{lts.tex}
\label{sec:lts}

A \emph{Labeled Transition System} (LTS) is a tuple
${\cal S} = (S, A, T)$ where 
$S$ is a (possibly infinite) set of states, 
$A$ is a (possibly infinite) set of \emph{actions}, and 
$T:S\!\times\! A\!\times\! S \!\rightarrow\!\B$
is the \emph{transition relation} of ${\cal S}$.
Let $s \in S$ and $a \in A$.
%
%
%
%
%
%
We denote with 
$\mbox{\rm Adm}({\cal S}, s)$ the set of actions
admissible in $s$, that is $\mbox{\rm Adm}({\cal S}, s)$ = $\{a \in A
\; | \; \exists s': T(s, a, s') \}$
and with
$\mbox{\rm Img}({\cal S}, s, a)$ the set of next
states from $s$ via $a$, that is $\mbox{\rm Img}({\cal S}, s, a)$ =
$\{s' \in S \; | \; T(s, a, s') \}$.
A {\em run} or \emph{path}
for an LTS ${\cal S}$ 
is a sequence 
$\pi$ =
$s_0, a_0, s_1, a_1, s_2, a_2, \ldots$ 
of states $s_t$ and actions $a_t$ 
such that
$\forall t \geq 0$ $T(s_t, a_t, s_{t+1})$.
The length $|\pi|$ of a finite run $\pi$ is the number of actions
in $\pi$. 
We denote with $\pi^{(S)}(t)$ the $(t + 1)$-th state element of
$\pi$, and with $\pi^{(A)}(t)$ the $(t + 1)$-th action element of
$\pi$. That is $\pi^{(S)}(t)$ = $s_t$, and $\pi^{(A)}(t)$ = $a_t$.
\subsubsection{LTS Control Problem}
\label{sec:lts-ctr}
A \emph{controller} for an LTS ${\cal S}$ 
is used to restrict the dynamics of ${\cal S}$ 
so that all states in the initial region 
will eventually reach 
the goal region. 
We formalize such a concept
by defining the LTS control problem 
and its solutions. 
In what follows, let ${\cal S} = (S, A, T)$ be an LTS, 
$I$, $G$ $\subseteq$ $S$ 
be, respectively, the {\em initial} and {\em goal} regions. 

\begin{definition}
  \label{def:ctroller-lts}
  \label{def:ctrproblem-lts}
  A \emph{controller} for 
${\cal S}$ is a function 
  $K : S \times A \rightarrow \B$
  such that $\forall s \in S$, $ \forall a \in A$, if $K(s, a)$ then
  $\exists s' \; T(s, a, s')$. 
  If $K(s,a)$ holds, we say that the action $a$ is {\em enabled} by $K$ in $s$. 

  The set of states for which at least one  
  action is enabled is denoted by ${\rm dom}(K)$. Formally, ${\rm dom}(K)$ $=$ $\{s \in
  S \; | \; $$\exists a \; K(s, a)\}.$
  
  We call a controller $K$ a {\em control law} if $K$ enables at most one action in 
  each state.  
  Formally, $K$ is a control law if, for all $s\in{\rm dom}(K)$, 
  $K(s, a)$ and $K(s,b)$ implies $a=b$.
  %

The \emph{closed loop system} 
  is the LTS ${\cal S}^{(K)}=(S, A, T^{(K)})$, where 
$T^{(K)}(s, a, s')$ $=$ $T(s, a, s') \wedge K(s, a)$.
%
%
%
%
%
%
%
\end{definition}

We call a path $\pi$ {\em fullpath}~\cite{emerson-toplas-04}
 if either it is infinite or its last state 
$\pi^{(S)}(|\pi|)$ has no successors 
(i.e. $\mbox{\rm Adm}({\cal S}, \pi^{(S)}(|\pi|)) = \varnothing$).
We denote with ${\rm Path}(s, a)$ the set of fullpaths starting in state
$s$ with action $a$, i.e. the set of fullpaths $\pi$ such that $\pi^{(S)}(0)=s$
and $\pi^{(A)}(0)=a$.
Given a path $\pi$ in ${\cal S}$, 
we define $j({\cal S},\pi,G)$ as follows. If there exists $n > 0$ s.t.
$\pi^{(S)}(n)\in G$, then $j({\cal S},\pi,G) = \min\{n \;|\; n >
0 \land \pi^{(S)}(n)\in G\}$. Otherwise, $j({\cal S},\pi,G) = +\infty$.
We require $n > 0$ since
our systems are nonterminating and each controllable state (including a goal state)
must have a path of positive length to a goal state.
Taking ${\rm sup}\, \varnothing = +\infty$, 
the {\em worst case distance} 
of a state $s$ from the goal region  $G$
is  $J({\cal S},G,s)={\rm sup} \{j({\cal S},\pi,G)~|~  \pi \in{\rm Path}(s,a), a \in {\rm
Adm}({\cal S},s)\}$.

 \begin{figure}
 \begin{center}
 $
 \xymatrix@C=10mm@R=10mm{
    &
    *+=<30pt>[o][F=]{0} 
		\ar@(ld,lu)@{.>}[]^{0,1}
    &
	*+=<30pt>[o][F-]{1} 
		\ar@/^/@{.>}[l]^{0}
		\ar@/^/[r]^{1}		
    & 
    *+=<30pt>[o][F-]{2} 
		\ar@/^/@{.>}[l]^{0}
		\ar@/^/[d]^{1}		
	\\
	& 
	*+=<30pt>[o][F-]{4} 
		\ar@(dl,dr)@[]_{0}
 		\ar@/^/@{.>}[u]^{1}
	&
	& 
	*+=<30pt>[o][F-]{3} 
		\ar@/^/[u]^{0}
		\ar@/^/@{.>}[ul]^{1}
	&
	\\
	} 
 $
 \caption{The LTS ${\cal S}$ in Example~\ref{ex:mgo}.} 
 \label{fig:lts-ex}
 \end{center}
 \end{figure}

\begin{definition}
\label{def:sol}
An LTS \emph{control problem} 
is a triple 
  $\cal P$ = $({\cal S},$ $I,$ $G)$. 
A {\em strong 
solution} (or simply a solution) to 
${\cal P}$ 
is 
a controller $K$ for ${\cal S}$, such that $I$ $\subseteq$ ${\rm dom}(K)$ and
for all $s \in {\rm dom}(K)$, 
$J({\cal S}^{(K)}, G, s)$ 
is finite.

An \emph{optimal} solution to ${\cal P}$
is a solution $K^{*}$ to ${\cal P}$ such that for all solutions
$K$ to ${\cal P}$, 
for all $s \in S$, we have 
$J({\cal S}^{(K^{*})}, G, s) \leq J({\cal S}^{(K)}, G, s)$.

The  \emph{most general optimal (mgo) solution}  to ${\cal P}$ 
is an optimal solution $\bar{K}$ to ${\cal P}$  such that  
for all optimal solutions $K$ to ${\cal P}$,
for all $s \in S$,
for all $a \in A$
we have $K(s, a)$ $\rightarrow$ $\bar{K}(s, a)$. 
This definition is well posed (i.e., the mgo solution is unique)
and $\bar{K}$ does not depend on $I$.
\end{definition}

\begin{example}
\label{ex:mgo}
Let ${\cal S}=(S,A,T)$ be the LTS in Fig.~\ref{fig:lts-ex}, where  
$S=\{0,1,2,3,4\}$, $A=\{0,1\}$ and the transition relation $T$ is 
defined by all arrows in the picture.
Let $I=S$ and let $G=\{ 0 \}$. 
The controller $K$ that enables all dotted arrows in the picture, 
is an mgo for the control problem $({\cal S}, I, G)$.
The controller $K'=K\setminus\{(0,1)\}$ that enables only the action 
$0$ in the state $0$, would be still an optimal solution, 
but not the most general.
The controller $K''=K\cup\{(3,0)\}$ that enables also the action $0$ in state $3$ 
would be still a solution (more general than $K$), but no more optimal. 
As a matter of fact, in this case $J({\cal S}^{(K'')},G,3)=3$, 
whereas $J({\cal S}^{(K)},G,3)=2$.  
\end{example}

\subsection{Discrete Time Linear Hybrid Systems}
\label{dths.tex}
\label{sec:dtlhs}
Many embedded control systems can be modeled as 
{\em Discrete Time Linear Hybrid Sytems} (DTLHSs) 
since they provide an uniform model both for the plant and for the control software.

 


 

\begin{definition}
\label{dths.def}
A {\em Discrete Time Linear Hybrid System} 
is a tuple ${\cal H} = (X,$ $U,$ $Y,$ $N)$ where:

  $X$ = $X^{r} \cup X^{d}$ 
  is a finite sequence of real ($X^{r}$) and 
  discrete ($X^{d}$) 
  {\em present state} variables.  
  The sequence $X'$ of {\em next state} variables is 
  obtained by decorating with $'$ all variables in $X$.

  $U$ = $U^{r} \cup U^{d}$ 
  is a finite sequence of 
  \emph{input} variables.

  $Y$ = $Y^{r} \cup Y^{d}$ 
  is a finite sequence of
  \emph{auxiliary} variables, that  
  are typically used to
  model \emph{modes} 
  or ``local'' variables.

  $N(X, U, Y, X')$ is a conjunctive predicate 
  over $X \cup U \cup Y \cup X'$ defining the 
  {\em transition relation} (\emph{next state}) of the system.
\smallskip

A DTLHS is {\em bounded} if the predicate $N$ is bounded.
\end{definition}

Since any bounded guarded predicate is 
equisatisfiable to a conjunctive predicate (see Sect.~\ref{subsection:predicates}), for the sake of readability we use 
bounded guarded predicates to describe the transition relation of 
bounded DTLHSs. To this aim, we also clarify which variables are boolean,
and thus may be used as guards in guarded constraints.

The semantics of DTLHSs is given in terms of LTSs as follows. 

\begin{definition}
Let ${\cal H}$ = ($X$, $U$, $Y$, $N$) 
be a DTLHS.
The dynamics of ${\cal H}$ 
is defined by the Labeled Transition System 
$\mbox{\rm LTS}({\cal H})$ = (${\cal D}_X$, ${\cal D}_U$,
$\tilde{N}$) where:
$\tilde{N} : {\cal D}_X \; \times \; {\cal D}_U \; \times \; {\cal D}_X \rightarrow \B$ 
is a function s.t.  $\tilde{N}(x, u, x') \equiv \exists \; y \in {\cal D}_Y \; N(x, u, y, x')$.
A \emph{state} $x$ for ${\cal H}$ is a state $x$ for 
$\mbox{\rm LTS}({\cal H})$ and a \emph{run} 
(or \emph{path}) for ${\cal H}$ is
a run for $\mbox{\rm LTS}({\cal H})$. 
\end{definition}

\begin{example}
\label{ex:dtlhs}
Let T be a positive constant (sampling time).
We define the DTLHS ${\cal H}$ $=$ $(\{x\},\{u\},$ $\varnothing$, $N)$ where 
$x$ is a continuous variable, $u$ is a boolean variable, and
$N(x, u, x')$ $\equiv$
$[\overline{u} \rightarrow x' = x+(\frac{5}{4}-x)T] 
\land
[u \rightarrow x' = x+ (x - \frac{3}{2})T]$.
Since $N(\frac{5}{4},0,\frac{5}{4})$ holds, 
the infinite path $\pi_0 = \frac{5}{4},0,\frac{5}{4},0\ldots$ 
is a run in $LTS({\cal H}) = (\R, \{0,1\}, N)$.
\end{example}

\subsubsection{DTLHS Control Problem}
\label{sec:dtlhs-ctr}
\label{sec:qcp}
A DTLHS 
control problem $({\cal H}, I, G)$ is defined as the LTS
control problem ($\mbox{\rm LTS}(\cal H)$, $I$, $G$).
To manage real valued variables, in classical control theory the
concept of {\em quantization} is introduced (e.g., see
\cite{quantized-ctr-tac05}). Quantization is the process of
approximating a continuous interval by a set of integer values. 
%
In the following we formally define a quantized feedback control
problem for DTLHSs.

A {\em quantization function} $\gamma$
for a real interval $I=[a,b]$ is a non-decreasing   
function $\gamma:I\mapsto \Z$ such that $\gamma(I)$ is a bounded integer interval.
We extend quantizations to integer intervals,
by stipulating that in such a case the quantization function
is the identity function.

\begin{definition}
  Let ${\cal H} = (X, U, Y, N)$ be a DTLHS, and 
  let $W=X\cup U\cup Y$.   
  A \emph{quantization} ${\cal Q}$ for $\cal H$
  is a pair $(A, \Gamma)$, where:

  $A$ is a 
  predicate over $W$ 
  that explicitely bounds each variable in $W$. 
  For each $w\in W$, we denote  
  with $A_w$ its {\em admissible region} and with $A_W$ $=$ $\prod_{w\in W} A_w$.
  
  $\Gamma$ is a set of maps $\Gamma = \{\gamma_w$ $|$
  $w \in W$ and $\gamma_w$ is a 
  quantization function for $A_w\}$. 

  Let $W = [w_1, \ldots w_k]$ 
  and $v = [v_1, \ldots v_k] \in A_{W}$. 
  We write $\Gamma(v)$ for the tuple $[\gamma_{w_1}(v_1), 
  \ldots \gamma_{w_k}(v_k)]$. 
\end{definition}
 
A control problem admits a \emph{quantized} solution if control
decisions can be made by just looking at quantized values. This enables
a software implementation for a controller.


\begin{definition}
  \label{def:qfc}
  Let ${\cal H} = (X, U, Y, N)$ be a DTLHS,
  ${\cal Q}=(A,\Gamma)$ be a quantization for ${\cal H}$
  and ${\cal P} = ({\cal H}, I, G)$ be a DTLHS control problem.
  A ${\cal Q}$ \emph{Quantized Feedback Control} (QFC) 
  solution to ${\cal P}$ is a
  solution $K(x, u)$ to ${\cal P}$ such that
  $
  K(x, u) = \hat{K}(\Gamma(x), \Gamma(u))
  $
  where 
  $\hat{K} : \Gamma(A_{X}) \times \Gamma(A_{U})$ $\rightarrow$ $\B$.
\end{definition}



\begin{example}
\label{ex:ctr-dths}
Let ${\cal H}$ be the DTLHS in Ex.~\ref{ex:dtlhs}.
Let ${\cal P}$ = (${\cal H}$, $I$, $G$) be a control problem, where 
$I\equiv -2\leq x\leq 2.5$, and $G\equiv \varepsilon\leq x\leq \varepsilon$, 
for some $\varepsilon\in\R$.
If the sampling time $T$ is small enough with respect to $\varepsilon$
(for example $T<\frac{\varepsilon}{10})$, 
the controller:
$
K(x,u)=(-2\leq x\leq 0\;\land\; \overline{u}) \; \lor \; 
(0\leq x\leq \frac{11}{8} \;\land\; u) \; \lor \;
(\frac{11}{8}\leq x\leq 2.5 \; \land\; \overline{u})
$ 
is a 
solution to $({\cal H}, I, G)$.
Observe that any controller $K'$ such that $K'(\frac{5}{4},0)$ holds is not 
a solution, because in such a case ${\cal H}^{(K)}$ may loop forever along 
the path $\pi_0$ of Ex.~\ref{ex:dtlhs}.

\label{ex:q-ctr}
Let us consider the quantization $(A, \Gamma)$ where $A=I$
and $\Gamma$ = $\{\gamma_x\}$ and $\gamma_x(x)=\lfloor x\rfloor$.
The set $\Gamma(A_x)$ of quantized states is the integer interval $[-2,2]$. 
No 
solution can exist, because in state $1$ either enabling action $1$ or 
$0$ 
allows infinite loops to be potentially executed in 
the closed loop system. The controller $K$ above can be 
obtained as a quantized controller decreasing the quantization step, for example by taking 
$\tilde{\Gamma}$ = $\{\tilde{\gamma}_x\}$ where $\tilde{\gamma}_x(x)=\lfloor 8x\rfloor$. 
\end{example}

\subsection{Control Software Generation}
\label{sec:ctr-sftw-syn}

Quantized controllers can be computed by solving LTS control problems: 
the \qks\ control software synthesis procedure consists of 
building a suitable finite state abstraction ({\em control abstraction}) $\hat{\cal H}$ 
induced by the quantization of a plant modeled as a DTLHS ${\cal H}$, 
computing an abstraction $\hat{I}$ (resp. $\hat{G}$) of the initial (resp. goal) region $I$ 
(resp. $G$) so that any solution to the LTS control problem $(\hat{\cal H}, 
\hat{I}, \hat{G})$  
is a finite representation of a solution to $({\cal H}, I, G)$. 
In~\cite{qks-cav10}, we give a constructive sufficient condition ensuring that the 
controller computed for $\hat{\cal H}$ 
is indeed a quantized controller for ${\cal H}$. 

\subsubsection{Symbolic Controller Synthesis}
\label{sec:ctr-syn}
Control abstractions for bounded DTLHSs are finite LTSs. 
For example, a typical quantization is the {\em uniform quantization} 
which consists in dividing the domain of each state variable $x$ 
into $2^{b_x}$ equal intervals, where $b_x$ is the number of
bits used by AD conversion. 
Therefore, the abstraction of a DTLHS induced by a uniform quantization has 
$2^B$ states, where $B=\sum_{x\in X} b_x$. 
By coding states and actions as sequences of bits, 
a finite LTS can be represented as an OBDD 
representing set of states and the transition relation by using their characteristic 
functions.

The \qks\ control synthesis procedure implements the function \fun{mgoCtr} in Alg.~\ref{strngctr.alg}, 
which adapts the algorithm presented in \cite{strong-planning-98}. 
Starting from goal states, 
the most general optimal controller is found incrementally 
adding at each step to the set of states $D(s)$ 
controlled so far, the {\em strong preimage} of $D(s)$, i.e. the set of states for which 
there exists at least an action $a$ that drives the system to $D(s)$, 
regardless of possible nondeterminism.

\begin{algorithm}
\caption{Symbolic Most General Optimal Controller Synthesis}
\label{strngctr.alg}
\begin{algorithmic}[1]
\REQUIRE An LTS control problem $({\cal S}, I, G)$, ${\cal S} = (S, A, T)$.
\ENSURE \fun{mgoCtr}(${\cal S}, I, G$)
\STATE $K(s, a) \gets 0$, $D(s) \gets G(s)$, $\tilde{D}(s) \gets 0$
\WHILE {$D(s) \neq \tilde{D}(s)$} 
  \STATE $F(s, a) \gets \exists s'\; T(s, a, s') \land \forall s'\; [T(s, a, s') \Rightarrow D(s')]$
  \STATE $K(s, a) \gets K(s, a) \lor (F(s, a) \land \not\exists a \; K(s, a))$
  \STATE $\tilde{D}(s) \gets D(s)$, $D(s) \gets D(s) \lor \exists a\; K(s, a)$
\ENDWHILE
\STATE {\bf return} $\langle \forall s\; [I(s) \Rightarrow \exists a\; K(s, a)], \exists a\; K(s, a), K(s, a)\rangle$
\end{algorithmic}
\end{algorithm}


\begin{figure}
\centering
\begin{tabular}{cc}
\hspace{-5mm}
\begin{minipage}{0.46\columnwidth}
\hspace{-7mm}
\includegraphics[width=0.9\textwidth]{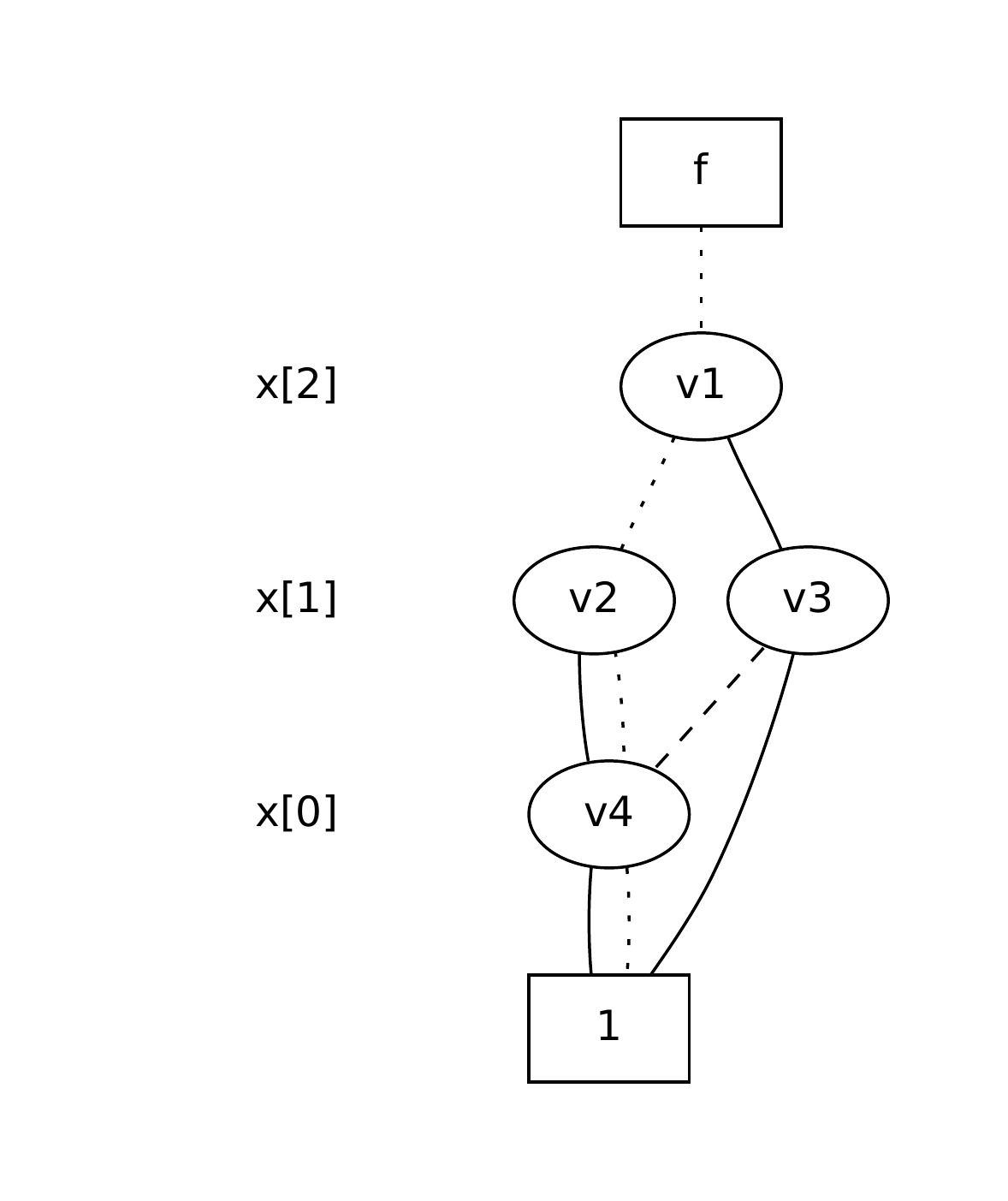}
  \caption{OBDD for $F$.}
  \label{fig:obdd-k}
  \end{minipage}
&
 \hspace{-4mm}
\begin{minipage}{0.54\columnwidth}
  \hspace{-4mm}
\framebox[0.9\columnwidth][c]{
\hspace{7.5mm}
\begin{minipage}{\columnwidth}
\input{c2.tex}
\end{minipage}
}
\hspace{-5mm}
\caption{C control software.}
\label{fig:c-software}
\end{minipage}
\end{tabular}	
\end{figure}


\subsubsection{C Code Generation}
\label{sec:ctr-sftw-gen}
The output of the function \fun{mgoCtr} is an OBDD $K$ representing an mgo as 
a relation $K(x, u)$.  
Let $k$ be the number of bits used to represent the set of actions. 
We are interested in a {\em control law} $F=[f_1,\ldots,f_k]$ such that 
$K(x, F(x))$ holds for all $x$~\cite{Tro98}. 
We first compute $k$ OBDDs $f_1, \ldots f_k$ representing $F$. 
For any $f_i$, by replacing each OBDD node with an \texttt{\small if-} {\tt {\small then-else}} block
and each OBDD edge with a \texttt{\small goto} statement, we obtain 
a C function {\tt f\_i} that implements the boolean function represented by $f_i$.  
Therefore, the size of 
{\tt f\_i} is proportional to the number of nodes in $f_i$. 
Its WCET is proportional to the {\em height} of $f_i$, 
since any computation of {\tt f\_i} corresponds to going through a path of $f_i$. 
As a consequence, the WCET of the control software turns out to be {\em linear} in the number of 
bits of the quantization schema. 
The C function {\tt ctrLaw} is obtained by translating the $k$ OBDDs representing $F$, 
 whereas {\tt ctrReg} is obtained by translating the OBDD representing the characteristic function of ${\rm dom}(K)$. 
 The actual code implementing control software is slightly more complicated to account for node sharing among OBDDs $f_1, \ldots, f_k$. 
Full details about the control software generation can be found in~\cite{icsea2011}.

\begin{example}
Let ${\cal P}=({\cal S}, I, G)$ be the control problem in Ex.~\ref{ex:mgo}. 
The five states of ${\cal S}$ can be represented by three boolean variables ($x_0, x_1, x_2$). 
Taking as input ${\cal P}$, \fun{mgoCtr} computes the mgo $K$ given in Ex.~\ref{ex:mgo}.
The control law $F$ is the OBDD depicted in Fig.~\ref{fig:obdd-k}. 
In Fig.~\ref{fig:c-software}, it is shown a snapshot of the control software generated for $F$. 
\end{example}

 
\section{Small Controllers Synthesis}
\label{sec:small-controllers}


Within the framework defined in the previous section, 
when finer (i.e. with more bits) quantization schemas are considered, 
better controllers are found, in terms of set-up time and ripple 
(see Sect.~\ref{expres.tex}).
On the other hand, the exponential growth of control software size 
is one of the main obstacles to overcome in order to make 
model based control software synthesis viable on large  problems. 
As explained in Sect.~\ref{sec:ctr-sftw-gen}, 
the size of the control software is proportional to the size of 
the OBDD computed by the function \fun{mgoCtr} in Sect.~\ref{sec:ctr-syn}. 
To reduce the number of nodes of such an OBDD, we devise a heuristic aimed at  
increasing OBDD node sharing by looking for control laws that are 
constant on large regions of the state space.

While optimal controllers implement smart control strategies that in each state try to find 
the best action to drive the system to the goal region, the function \fun{smallCtr} in 
Sect.~\ref{sec:control-synthesis-algo}
 looks for more ``regular''  
controllers that enable the same action in 
as large as possible regions of the state space. 

Finally, note that changing the control synthesis algorithm does not change the 
WCET of the generated control software since it only depends on the number of quantization bits
(Sect.~\ref{sec:ctr-sftw-gen}).

\subsection{Control Synthesis Algorithm}
\label{sec:control-synthesis-algo}

Our controller synthesis algorithm is shown in Alg.~\ref{alg:smallctr4}. 
To obtain a succinct controller, 
the function \fun{smallCtr} modifies the \fun{mgoCtr} preimage 
computation of set of states $D$   
by {\em finding maximal regions} of 
states from which the system reaches $D$ in {\em one or more steps}  
by repeatedly performing the {\em same action}.
This involves finding at each step a family of fixpoints: for each action $a$, 
$E(s,a)$ is the maximal set of states from which $D$ is reachable 
by repeatedly performing the action $a$ only. 

The function \fun{smallCtr}(${\cal S}, I, G$) 
computes a solution $K$ 
to the control problem (${\cal S}, I, G$) (Theor.~\ref{thm:correctness}),   
such that ${\rm dom}(K)$ is {\em maximal} with respect to any other 
solution (Theor.~\ref{thm:maximal}). 

In Alg.~\ref{alg:smallctr4} $K(s,a)$ denotes the OBDD that represents the controller computed 
so far, $D(s)$ the OBDD that represents its domain, and $\tilde{D}(s)$ the domain 
of the controller computed at the previous iteration. 
The computation starts by initializing $K(s,a)$ and its domain $D(s)$ 
to the empty OBDD, that corresponds to the always undefined function 
and the empty set (line \ref{line:init}). 

At each iteration of the outer loop (lines~\ref{line:repeat}--\ref{line:until}), 
a target set of states $O(s)$ is considered (line~\ref{line:init-target}): 
$O(s)$ consists of goal states $G(s)$ and the set $D(s)$ of already controlled states. 
The inner loop (lines \ref{line:start-inner-loop}--\ref{line:end-inner-loop}) computes, 
for each action $a$, the maximal set of states $E(s,a)$ 
that can reach the target set $O(s)$ by repeatedly performing the 
action $a$ only. 
For any action $a_0$, $E(s,a_0)$ is the mgo of the control problem
$({\cal S}', I, O)$, where the LTS ${\cal S}'=(S, \{a_0\}, T')$ is obtained  
by restricting the dynamics of ${\cal S}$ to the action $a_0$.

After that, $K$ is updated by adding to it state-action pairs in $E(s,a)$. 
Instead of simply computing $K(s,a)\!\gets\!K(s,a)\lor E(s,a)$, to keep the controller smaller, 
function \fun{smallCtr} avoids to add to $K$ possible 
intersections between any pair of sets $E(s,a)$ and $E(s,b)$ 
for $a\neq b$ (line \ref{line:update-K}).
As a consequence, the resulting controller $K$ is a control law, i.e. it  
enables just one action in a given state $s$. 

The order in which the loop in lines~\ref{line:for-start}--\ref{line:for-end} 
enumerates the set of actions gives priority to actions that are considered 
before. Let $a_0, a_1, \ldots, a_n$ be the sequence of actions as enumerated by 
the {\bf for} loop. If 
there exists at least one action $a$ such that $E(s,a)$ holds, 
then we will have that $K(s,a_k)$ holds only for a certain $a_k$ such that 
$k=\min \{i~|~E(s,a_i)\}$.
In many control problems, this is useful as it allow one to give priority to 
some actions, e.g. in order to prefer ``low power'' actions. 

The computation ends when no new state is added to the controllable region, 
i.e. when $D(s)$, 
is the same as  $\tilde{D}(s)$.

\begin{algorithm}[ht]
\caption{Symbolic Small Controller Synthesis} 
\label{alg:smallctr4}
\begin{algorithmic}[1]
\REQUIRE LTS control problem $({\cal S}, I, G)$, with LTS ${\cal S} = (S, A, T)$
\ENSURE \fun{smallCtr}(${\cal S}, I, G$)
\STATE $K(s, a) \gets 0$, $D(s) \gets 0$
	\label{line:init} 
\REPEAT 
	\label{line:repeat}
  \STATE $O(s)\gets D(s)\;\lor\; G(s)$, $E(s,a)\gets 0$
  	\label{line:init-target}
  	\label{line:init-E}
    \REPEAT
    \label{line:start-inner-loop}
    \STATE \hspace*{-1.3mm}$F(s, a) \!\gets\! \exists s' T(s, a, s') \!\land\! \forall s' [T(s, a, s')\! \Rightarrow\! E(s',a) \!\lor\! O(s')]$ 
     \label{line:Fgets}
    \STATE \hspace*{-1.3mm}$\tilde{E}(s,a)\gets E(s,a)$, $E(s,a)\gets E(s,a) \,\lor \, F(s,a)$
     \label{line:Egets}
   \UNTIL{$E(s,a) = \tilde{E}(s,a)$}   
     \label{line:end-inner-loop}
  \FORALL{$\tilde{a} \in A$}
     \label{line:for-start}
    \STATE $K(s, a) \gets K(s, a) \lor (E(s,a) \land a = \tilde{a}\,\land \not\exists b \; K(s,b))$
	\label{line:update-K}
  \ENDFOR
    \label{line:for-end}
  \STATE $\tilde{D}(s) \gets D(s)$, $D(s) \gets D(s) \lor \exists a K(s, a)$
\UNTIL{$D(s) = \tilde{D}(s)$}
	\label{line:until}
\STATE {\bf return} $\langle \forall s\; [I(s) \Rightarrow \exists a K(s, a)], \exists a K(s, a), K(s, a)\rangle$
     \label{line:return}
\end{algorithmic}
\end{algorithm}

\begin{example}
\label{ex:small-controller}

\sloppy 

Let ${\cal P}$ be the control problem described in Ex.~\ref{ex:mgo}.
The first iteration of Alg.~\ref{alg:smallctr4} 
computes the predicate $E(s,a)$ that holds on the set 
$\{(0,0), (0,1), (1,0), (2,0), (3,0), (4,1)\}$, that is  
$E(s,a)=E(s,0)\,\lor\,E(s,1)$, where the set of pairs that satisfies   
$E(s,0)$ is $\{(0,0), (1,0), (2,0), (3,0)\}$
and the set of pairs that satisfies  
$E(s,1)$ is $\{(0,1), (4,1)\}$. 
Depending on the order in which the {\bf for} loop in 
lines~\ref{line:for-start}--\ref{line:for-end} 
enumerates the set of actions, in the state 0 the resulting controller $K^*$ 
enables the action 0 ($K^*(s,a)=E(s,0)\cup (E(s,1)\setminus\{(0,1)\})$) 
or the action 1 ($K^*(s,a)=E(s,1)\cup (E(s,0)\setminus\{(0,0)\})$).
Observe that, in any case, $K^*$ is not optimal. 
An optimal controller would enable the transition $T(3,1,1)$ rather than $T(3,0,2)$
(see Ex.~\ref{ex:mgo}).

\fussy 

The OBDD representing the control law $F$ such that $K^*(x, F(x))$ holds, 
is depicted in Fig.~\ref{fig:obdd-k*}. 
It has 3 nodes, instead of the 4 nodes required for the OBDD representation 
of the control law (Fig.~\ref{fig:obdd-k}) obtained from the controller $K$  given in Ex.~\ref{ex:mgo}. Accordingly, the corresponding C code in Fig.~\ref{fig:c-software-macro} has 
3 {\tt {\small if-then-else}} blocks, instead of the 4 in the C code of Fig.~\ref{fig:c-software}. 
\end{example}

\begin{remark}
\label{remark:action-switches}
Let $\pi$ $=$ $s_0,a_0,$ $s_1,a_1,$ $\ldots,$ $a_{n-1},s_n$ 
be a path. An {\em action switch} in $\pi$ occurs whenever $a_i\not=a_{i+1}$.
Controllers generated by Alg.~\ref{alg:smallctr4} 
implement control strategies with a very low number of switches. 
In many systems this is a desirable property. 
A ``switching optimal'' control strategy cannot be, however, implemented 
 by a {\em memoryless state-feedback} control law. 
 As an example, take again the control problem ${\cal P}$ described in Ex.~\ref{ex:mgo}. 
 The controller defined by $E(s,a)$ in Ex.~\ref{ex:small-controller} 
 contains all switch optimal paths. However, to minimize the number of switches along the paths 
 going through state $0$, a controller should enable action $0$ when coming from state 1,
 action $1$ when coming from $4$, and repeat the last action ($0$ or $1$) when the system is 
 executing the self-loops in state $0$. In other words, only a {\em feedback controller with memory} 
 can implement this control strategy. 
\end{remark}

\begin{figure}
\centering
\begin{tabular}{cc}
\hspace{-5mm}
\begin{minipage}{0.46\columnwidth}
\hspace{-7mm}
\includegraphics[width=1\textwidth]{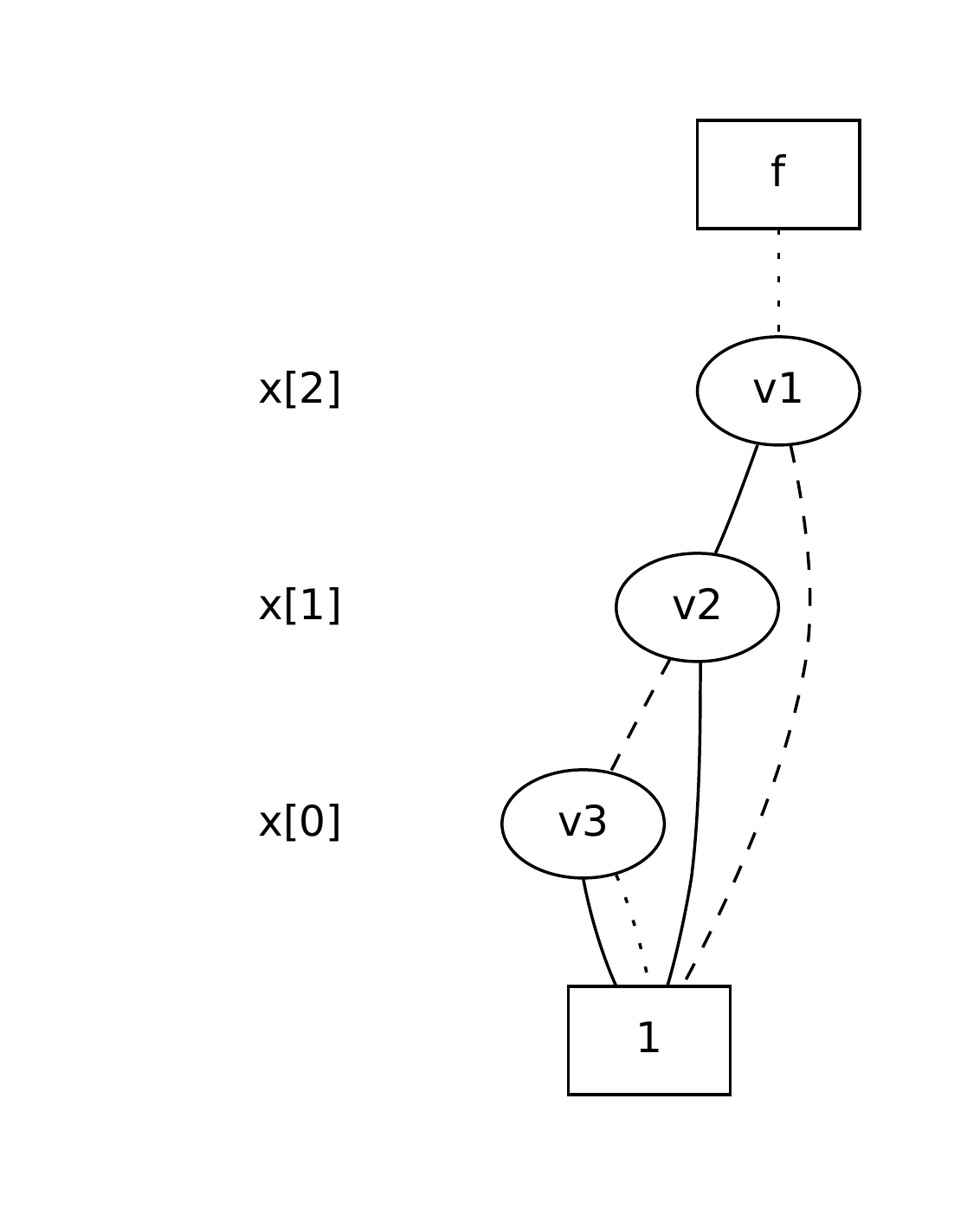}
  \caption{OBDD for $F^*$.}
  \label{fig:obdd-k*}
  \end{minipage}
&
 \hspace{-4mm}
\begin{minipage}{0.54\columnwidth}
  \hspace{-4mm}
\framebox[1.1\columnwidth][c]{
\hspace{-1.6mm}
\begin{minipage}{\columnwidth}
\input{c3.tex}
\end{minipage}
}
\vspace{3mm}
\hspace{-5mm}
\caption{C code for $F^*$.}
\label{fig:c-software-macro}
\end{minipage}
\end{tabular}	
\end{figure}

\subsection{Synthesis Algorithm Correctness\\ and Completeness}
\label{sec:algo-correctness}
In the following, we establish the correctness of Alg.~\ref{alg:smallctr4}, 
by showing that the controller computed by \fun{smallCtr} is indeed a 
solution to the control problem given as input (Theor.~\ref{thm:correctness}), 
and its completeness, in the sense that the domain 
of the computed controller is {\em maximal} with respect to the domain of 
any other solution (Theor.~\ref{thm:maximal}). 

\begin{theorem}
\label{thm:correctness}
Let ${\cal S}=(S,A,T)$ be an LTS, and $I,G\subseteq S$ be two sets of states.
If $\fun{smallCtr}(${\cal S}, I, G$)$ 
returns the tuple $\langle \mbox{\sc True},$ $D,$ $K\rangle$, 
then $K$ is a solution to the control problem $({\cal S}, I, G)$.
\end{theorem}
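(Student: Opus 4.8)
The plan is to check directly that $K$ satisfies the two clauses of a strong solution in Definition~\ref{def:sol}. The inclusion $I\subseteq{\rm dom}(K)$ is immediate: since the algorithm returned, every loop terminated, and a straightforward induction on the outer iterations shows that $D(s)$ always equals the domain of the controller $K$ accumulated so far; hence the first component of the returned tuple is exactly the predicate ``$I\subseteq{\rm dom}(K)$'', which by hypothesis is \mbox{\sc True}. One also checks, again by induction, that $K$ is a bona fide controller (if $K(s,a)$ then $s$ lies in some set $E(s,a)$ produced at line~\ref{line:Fgets}, which forces $\exists s'\,T(s,a,s')$) and that $K$ is in fact a \emph{control law}: the guard $\not\exists b\,K(s,b)$ in line~\ref{line:update-K} prevents a second action from ever being enabled in a state.

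For the liveness clause I would fix notation for the fixpoints the algorithm builds. Index the outer iterations $j=1,\dots,J^{*}$; let $D_{j-1}$ be the value of $D$ at the start of iteration $j$ and $O_j=D_{j-1}\cup G$; and for each $a\in A$ let $E_j^{0}(\cdot,a)=\varnothing$ and $E_j^{m+1}(\cdot,a)=\{s\mid \exists s'\,T(s,a,s')\ \wedge\ \forall s'\,(T(s,a,s')\Rightarrow s'\in E_j^{m}(\cdot,a)\cup O_j)\}$, so that the inner loop (lines~\ref{line:start-inner-loop}--\ref{line:end-inner-loop}) computes $E_j(\cdot,a)=\bigcup_{m}E_j^{m}(\cdot,a)$. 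For $s\in{\rm dom}(K)$ let $\ell(s)$ be the outer iteration in which $s$ first enters ${\rm dom}(K)$, let $a(s)$ be the unique action enabled at $s$, let $r(s)=\min\{m\mid s\in E_{\ell(s)}^{m}(\cdot,a(s))\}$, and fix an injection $\mathrm{idx}:A\to\N$ recording the order in which the \textbf{for} loop at lines~\ref{line:for-start}--\ref{line:for-end} scans the actions. Put
\[
\nu(s)=\bigl(\ell(s),\ \mathrm{idx}(a(s)),\ r(s)\bigr)\in\N^{3}\quad\text{with the lexicographic order.}
\]

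The core of the proof is the claim: for $s\in{\rm dom}(K)$ and any $s'$ with $T(s,a(s),s')$, either $s'\in G$ or $s'\in{\rm dom}(K)$ and $\nu(s')<\nu(s)$. Writing $j=\ell(s)$, from $s\in E_j^{r(s)}(\cdot,a(s))$ we get $s'\in E_j^{r(s)-1}(\cdot,a(s))\cup D_{j-1}\cup G$. If $s'\in G$ we are done; if $s'\in D_{j-1}\setminus G$ then $\ell(s')\le j-1$ and the first coordinate of $\nu$ strictly decreases; otherwise $s'\in E_j^{r(s)-1}(\cdot,a(s))$ and $s'\notin D_{j-1}\cup G$, so $s'$ is not yet in ${\rm dom}(K)$ when iteration $j$ begins, and since $s'\in E_j(\cdot,a(s))$ the \textbf{for} loop of iteration $j$ does enable some action at $s'$ — namely the one of smallest index among those $\tilde a$ with $s'\in E_j(\cdot,\tilde a)$ — so $\ell(s')=j$ and $\mathrm{idx}(a(s'))\le\mathrm{idx}(a(s))$. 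If this inequality is strict the second coordinate of $\nu$ drops; if it is an equality then $a(s')=a(s)$ and $r(s')\le r(s)-1$, so the third coordinate drops. In every case $\nu(s')<\nu(s)$, which proves the claim. Since $K$ is a control law, a run of ${\cal S}^{(K)}$ from $s\in{\rm dom}(K)$ proceeds by repeatedly following $a(\cdot)$, and by the claim $\nu$ strictly decreases along it until a state of $G$ is reached; well-foundedness of the lexicographic order on $\N^{3}$ forces this to happen after a positive and finite number of steps, and since $A$, the number of outer iterations, and each inner-loop length are finite (the algorithm returned), all three coordinates of $\nu$ are bounded, so the number of steps is uniformly bounded by $\nu(s)$. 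Hence $J({\cal S}^{(K)},G,s)<\infty$ for every $s\in{\rm dom}(K)$, and together with $I\subseteq{\rm dom}(K)$ this makes $K$ a solution to $({\cal S},I,G)$.

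I expect the rank claim to be the main obstacle, specifically the bookkeeping at line~\ref{line:update-K}: a state reached inside the current layer can be assigned, by the priority order of the \textbf{for} loop, a \emph{different} action from the one just played, and one has to argue that this is only possible by strictly decreasing the action-index coordinate of $\nu$, whereas keeping the same action strictly decreases the inner-fixpoint coordinate — which is exactly what makes the three-level lexicographic rank decrease at every step.
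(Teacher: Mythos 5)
Your proof is correct, and it is worth noting how it relates to the paper's own argument. The paper proves the liveness clause by two nested inductions that mirror the algorithm's structure: an inner induction on the iterations of the \textbf{repeat} loop at lines~\ref{line:start-inner-loop}--\ref{line:end-inner-loop} (showing $J({\cal S}^{(E)},O,s)$ is finite whenever $E(s,a)$ holds), followed by an outer induction on the iteration $i$ at which a state enters ${\rm dom}(K)$, combined via the inequality $J({\cal S}^{(K)},G,s)\leq J({\cal S}^{(E_i)},{\rm dom}(E_{i-1}),s)+J({\cal S}^{(E_{i-1})},O,s)$. Your outer coordinate $\ell(s)$ and inner coordinate $r(s)$ are exactly these two inductions, repackaged as a single well-founded lexicographic ranking with a one-step decrease claim. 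The genuine difference is your middle coordinate $\mathrm{idx}(a(s))$: the paper's proof reasons about paths in ${\cal S}^{(E)}$, where $E$ may enable several actions per state, and passes to $K$ without explicitly addressing the fact that line~\ref{line:update-K} may assign a successor $s'$ (lying in the same outer layer) a \emph{different}, higher-priority action than the one just played, so that the run under $K$ need not follow a single $E(\cdot,a)$ fixpoint chain. Your observation that the reassigned action can only have a strictly smaller index, while keeping the same action strictly decreases the inner-fixpoint stage, closes precisely this gap; the cost is the extra bookkeeping, and the implicit use of finiteness of $A$ and of the iteration counts (which the paper also assumes, e.g.\ in its termination argument via $|S|-|{\rm dom}(E)|$). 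Your version also yields, as a by-product, the explicit uniform bound on path length needed for $J$ (a supremum over paths) to be finite, a point the paper leaves implicit.
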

\begin{proof}
If $\fun{smallCtr}(${\cal S}, I, G$)$ 
returns the tuple $\langle \mbox{\sc True},$ $D,$ $K\rangle$, 
clearly $I\subseteq {\rm dom}(K)$ (see Alg.~\ref{alg:smallctr4}, line\ref{line:return}). 
We have to show that, for all $s\in {\rm dom}(K)$, 
$J({\cal S}^{(K)},G,s)$ is finite. 

First of all, we show that 
at the end of the inner {\bf repeat} loop of \fun{smallCtr} (lines \ref{line:start-inner-loop}--\ref{line:end-inner-loop}), 
if $E(s,a)$ holds, then we have that 
$J({\cal S}^{(E)},$ $O,$ $s)$ is finite. We proceed by induction on the number of 
iteration of the inner {\bf repeat} loop.
Denoting with $F_i(s,a)$ the predicate $F(s,a)$ computed in line~\ref{line:Fgets} 
during the $i$-th iteration, 
we will show that if $F_n(s,a)$ holds, then $J({\cal S}^{(E)},$ $O,$ $s)=n$.
If $F_1(s,a)$ holds, then for all $s'$ such that $T(s,a,s')$, $s'$ belongs to $O$, 
and hence $J({\cal S}^{(E)},$ $O,$ $s)=1$. 
Along the same lines, 
if $F_{n+1}(s,a)$ holds, then $J({\cal S}^{(E)},$ $F_{n},$ $s)=1$, 
and by applying induction hypothesis, $J({\cal S}^{(E)},$ $O,$ $s)=n+1$. 
As for termination, we have that if $\tilde{E}(s,a)\neq E(s,a)$
then at least one new state has been included in $E(s,a)$. Thus the function 
$|S|-|{\rm dom}(E)|$ is strictly positive and strictly decreasing at each iteration. 

The outer {\bf repeat} loop behaves in a similar way. 
Denoting with $E_i(s,a)$ the predicate $E(s,a)$ computed in line~\ref{line:init-E}  
during the $i$-th iteration, 
if $s\in {\rm dom}(K)$, then $E_i(s,a)$ holds for some $i$ and some $a$. 
We prove the statement of the theorem by induction on $i$.
If $i\!=\!1$, we have that $O(s)\!=\!G(s)$ and that 
$J({\cal S}^{(E_1)},$ $O,$ $s)$ is finite, and hence 
trivially $J({\cal S}^{(K)},$ $G,$ $s)$ is finite.
If $i>1$, then we have that $J({\cal S}^{(E_i)},$ ${\rm dom}(E_{i-1}),$ $s)$ 
is finite. Since, by inductive hypothesis, also $J({\cal S}^{(E_{i-1})},$ $O,$ $s)$ 
is finite, we have that $J({\cal S}^{(K)},$ $G,$ $s)$ $\leq$ $J({\cal S}^{(E_i)},$ ${\rm dom}(E_{i-1}),$ $s)$ + $J({\cal S}^{(E_{i-1})},$ $O,$ $s)$ is finite. 
\end{proof}

\begin{theorem}
\label{thm:maximal}
Let ${\cal S}=(S,A,T)$ be an LTS, and $I,G\subseteq S$ be two sets of states.
If $\fun{smallCtr}(${\cal S}, I, G$)$ 
returns the tuple $\langle \mbox{\sc True},$ $D,$ $K\rangle$, 
then $D={\rm dom}(K)$ is the maximal controllable region, i.e. 
for any other solution $K^*$ to the control problem (${\cal S}, I, G$) 
we have ${\rm dom}(K^*)\subseteq{\rm dom}(K)$.
\end{theorem}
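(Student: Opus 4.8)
The plan is to show that every state from which \emph{some} solution can drive the system to $G$ is eventually captured in $D$ by \fun{smallCtr}. The key observation is a structural characterization of the maximal controllable region: a state $s$ lies in $\mathrm{dom}(K^*)$ for some solution $K^*$ if and only if there is a finite bound $N$ and a strategy (possibly with memory, but here we only need positional-per-phase) such that all fullpaths from $s$ respecting that strategy hit $G$ within $N$ steps. I would first argue that the maximal controllable region $R$ is well-defined: the union of domains of all solutions is itself (the domain of) a solution, by taking the pointwise ``or'' of the controllers and noting that worst-case distance can only be controlled state-by-state; this mirrors the well-posedness remark for the mgo in Definition~\ref{def:sol}. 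So it suffices to prove $R\subseteq D$ at termination.

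The heart of the argument is a ``ranking'' induction on $R$. For $s\in R$, let $r(s)=J(\mathcal{S}^{(\bar K)},G,s)$ where $\bar K$ is the mgo of $(\mathcal{S},R,G)$ — a finite natural number. I would prove by induction on $r(s)$ that $s\in D$ at termination of \fun{smallCtr}. The subtlety is that \fun{smallCtr} does not add single-step strong preimages but entire \emph{single-action fixpoints} $E(s,a)$, so I cannot directly mimic the classical proof. Instead I would show: if $r(s)=n$, then there is an action $a$ and a state $s'$ with $T(s,a,s')$, every $T(s,a,\cdot)$-successor already in (an earlier-stage) $R_{<n}$, i.e.\ the optimal first move from $s$ stays optimal. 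This is not quite enough because the optimal controller may \emph{switch} actions, whereas $E(s,a)$ requires repeating $a$. The fix is the key claim: \emph{if a state is controllable at all, it is controllable to $O$ by repeating a single action}, once $O$ already contains all states of strictly smaller rank. Concretely, I would show that the set $R$ decomposes into ``shells'' $R_1\subseteq R_2\subseteq\cdots$ where $R_k$ is reachable from $R$ within $k$ single-action phases, and argue $R_k\subseteq D$ by induction on $k$, using at each step that the inner {\bf repeat} loop computes exactly $E(s,a)=$ the mgo of $(\mathcal{S}',I,O)$ with $\mathcal{S}'$ restricted to action $a$, hence captures the \emph{entire} $a$-only controllable region towards $O$, not just one preimage.

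For termination and the matching of $D$ with the maximal region, I would combine this with Theorem~\ref{thm:correctness}: \fun{smallCtr} returns $\langle\textsc{True},D,K\rangle$, so $K$ is a solution and $D=\mathrm{dom}(K)\subseteq R$ trivially; the shell induction gives $R\subseteq D$; hence $D=R$ and for any solution $K^*$, $\mathrm{dom}(K^*)\subseteq R=D$. The monotone growth of $D$ across outer iterations plus finiteness of $S$ guarantees the outer loop terminates, so the fixpoint reached is genuinely the full $R$ and not a premature stabilization — this needs the claim that $D$ strictly grows unless $D\supseteq R$, which is exactly what the shell argument delivers.

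The main obstacle I anticipate is the \emph{single-action sufficiency} claim: showing that adding all rank-$<n$ states to $O$ makes every rank-$n$ state reachable by \emph{repeating one action}. This is false for arbitrary $O$ (Remark~\ref{remark:action-switches} shows switching can be unavoidable for \emph{optimal} strategies), so the proof must crucially exploit that we only demand \emph{some} finite distance, not the optimal one, and that $O$ is closed under the already-processed shells. I would handle it by taking, for a rank-$n$ state $s$ with optimal path $s\xrightarrow{a_0}s_1\xrightarrow{a_1}\cdots$, the \emph{first} action $a_0$: either every $a_0$-successor of $s$ is already in $O$ (done in one step), or some successor has rank $n-1$ but is reached by a \emph{different} action at its optimal move — but then that successor was already placed in $O$ at an earlier shell, so $s$ still reaches $O$ by the single action $a_0$. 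Formalizing ``earlier shell'' so that this induction is not circular is the delicate point, and I would pin it down by indexing shells by the number of action-switches in an optimal-length witness path rather than by path length.
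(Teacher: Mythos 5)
Your overall skeleton---induct on a rank derived from an arbitrary solution and use the termination fixpoint of \fun{smallCtr}---is the right one and is essentially what the paper does, but the central difficulty you identify is a red herring, and the ``delicate point'' you leave open is exactly where your argument is not closed. You worry that \fun{smallCtr} only adds \emph{single-action fixpoints} $E(s,a)$, so that a state whose witness strategy must switch actions might be missed, and you propose to repair this with shells indexed by the number of action switches. No such repair is needed: on the \emph{first} pass of the inner loop $E(s,a)=0$, so line~\ref{line:Fgets} computes $F(s,a)=\exists s'\,T(s,a,s')\land\forall s'[T(s,a,s')\Rightarrow O(s')]$, i.e.\ the plain one-step strong preimage of $O=D\lor G$, for \emph{every} action $a$ simultaneously. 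The single-action iteration that follows only enlarges $E$. Hence at the outer fixpoint $D=\tilde D$, the set $D$ is closed under one-step strong preimage: if ${\rm Img}({\cal S},s,a)\subseteq D\cup G$ for some $a$, then $s\in D$. With this, the paper's proof is a short induction on $n$ over ${\rm dom}_n(K^*)=\{s\mid J({\cal S}^{(K^*)},G,s)=n\}$ for an \emph{arbitrary} solution $K^*$: a rank-$n$ state has an action all of whose successors have smaller rank, hence (by the inductive hypothesis) lie in $D\cup G$, hence $s\in D$ by closure. No single-action-sufficiency claim, no shell decomposition, and no mgo of the maximal region are needed; your proposal as written does not actually discharge the circularity concern it raises, it only names a candidate index.

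Separately, your opening lemma is justified incorrectly: the pointwise ``or'' of two solutions need not be a solution. Take $S=\{s,t,g\}$, $G=\{g\}$, with $T(s,a,t)$, $T(s,b,g)$, $T(t,a,g)$, $T(t,b,s)$; then $K_1=\{(s,a),(t,a)\}$ and $K_2=\{(s,b),(t,b)\}$ are both solutions, but $K_1\lor K_2$ admits the infinite run $s,a,t,b,s,a,t,b,\ldots$ that never reaches $g$, so $J$ is infinite. The \emph{conclusion} (that a maximal controllable region exists) is true, but it requires a more careful construction (e.g.\ the mgo), and in any case it is dispensable: proving ${\rm dom}(K^*)\subseteq{\rm dom}(K)$ directly for each solution $K^*$, as above, already yields the theorem.
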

\begin{proof}
Let ${\rm dom}_n(K)=\{s~|~J({\cal S}^{(K)},s,G)=n\}$. 
We will show by induction that, for all $n$, ${\rm dom}_n(K^*)\subseteq{\rm dom}(K)$.

$(n=1)$ Let $s\in{\rm dom}_1(K^*)$. Then $\mbox{\rm Adm}({\cal S}, s)\not=\varnothing$ 
and there exists at least one action $a\in{\rm Adm}({\cal S}, s)$ such that 
$K^*(s,a)$ holds. Thus, for all $s'$ such that $T(s,a,s')$ we have that $s'\in G$.
But this means that $F(s,a)$ holds 
(Alg.~\ref{alg:smallctr4}, line~\ref{line:Fgets}) 
and therefore $K(s,a)$ holds. Hence $s\in {\rm dom}(K)$. 

$(n>1)$ Let $s\in{\rm dom}_n(K^*)$. Then $\mbox{\rm Adm}({\cal S}, s)\not=\varnothing$ 
and there exists at least an action $a\in{\rm Adm}({\cal S}, s)$ such that  
$K^*(s,a)$ holds. Thus, for all $s'$ such that $T(s,a,s')$ we have that $s'\in {\rm dom}_{n-1}(K^*)$. 
By inductive hypothesis, ${\rm dom}_{n\!-\!1}(K^*)\!\subseteq\!{\rm dom}(K)$.
Therefore, for all $s'$ such that $T(s,a,s')$ we have that $s'\!\in\! {\rm dom}(K)$.
Let us suppose that $s\not\in{\rm dom}(K)$. But this implies that $\mbox{\rm Img}({\cal S}, s, a)$ 
$\not\subseteq {\rm dom}(K)$, otherwise Alg.~\ref{alg:smallctr4} would not terminated 
before adding $s$ to $E(s,a)$ at some iteration. This leads to a contradiction, 
because $\mbox{\rm Img}({\cal S}, s, a)$ $\subseteq$ ${\rm dom}_{n-1}(K^*)\subseteq{\rm dom}(K)$.
\end{proof}

  \section{Experimental Results}
\label{expres.tex}
\label{sec:expres}
In this section we present our experiments that aim at
evaluating the effectiveness of our control software synthesis technique. 
We mainly evaluate the control software size reduction and the impact on 
other non-functional control software requirements such as set-up time (optimality) and 
ripple.

We implemented \fun{smallCtr} in the C programming language using 
the CUDD~\cite{CUDD} package for OBDD based computations.
The resulting tool, \qkssc, extends the tool \qks\ by adding the possibility of 
synthesising control software (step 2 in Fig.~\ref{fig:cssf}) 
by using \fun{smallCtr} instead of the mgo 
controller synthesis \fun{mgoCtr}.
 
In Sect.~\ref{sec:inverted-pendulum-model} and ~\ref{sec:multi-input-buck-model}
we will present the DTLHS models of the inverted pendulum and the multi-input 
buck DC-DC converter, on which our experiments focus.
In Sect.~\ref{expres.tex.setting} we give the details of the experimental setting, and 
finally, in Sect.~\ref{sec:exp-disc}, we discuss experimental results.

\subsection{The Inverted Pendulum as a DTLHS}
\label{ex:dths}
\label{sec:inverted-pendulum-model}

\begin{figure}
  \centering
  \includegraphics[width=0.75\columnwidth]{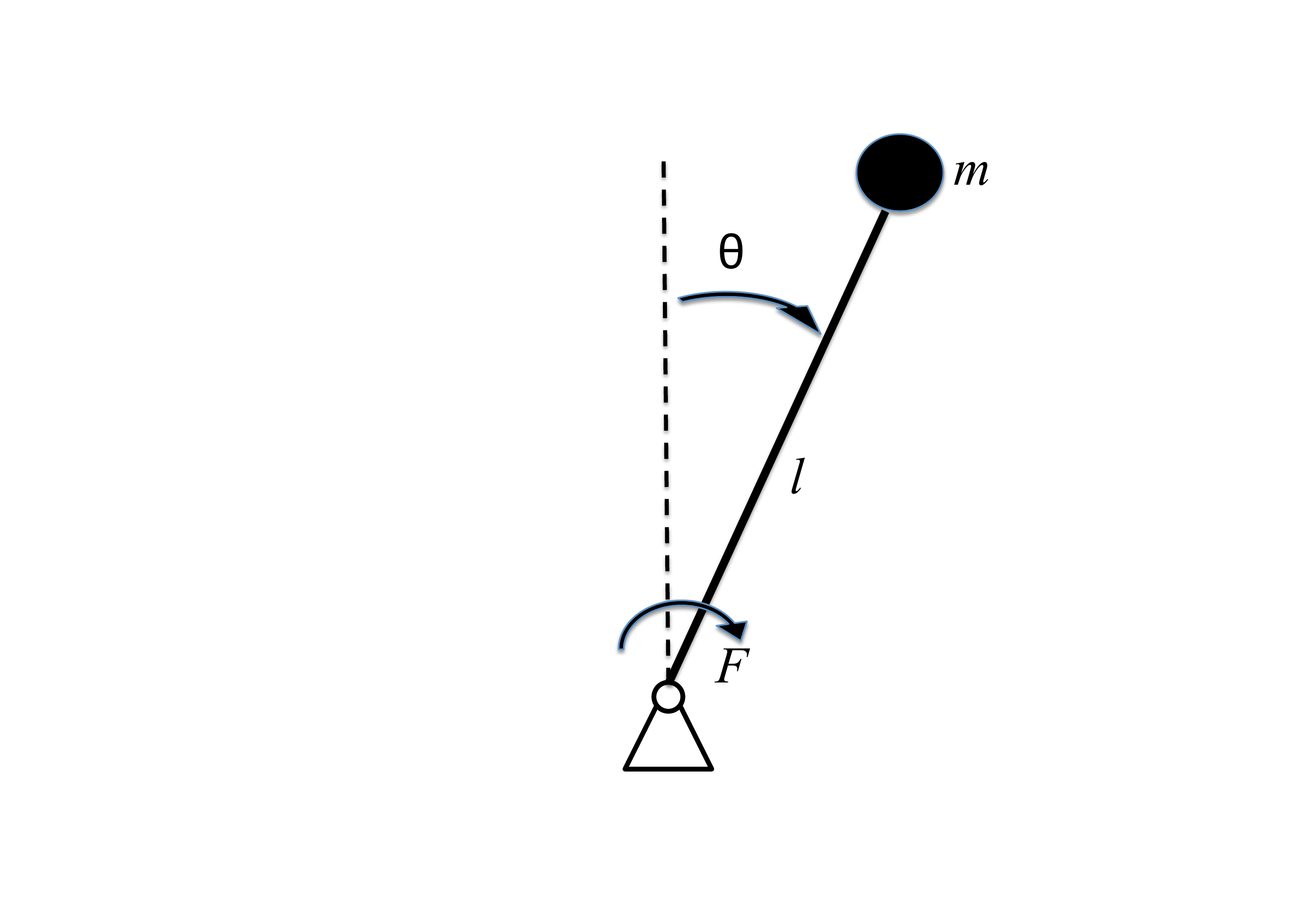}
  \caption{Inverted Pendulum with Stationary Pivot Point.}
  \label{fig:invpend}
\end{figure}

The inverted pendulum~\cite{KB94} (see Fig.~\ref{fig:invpend}) is modeled by taking the angle $\theta$ and the angular velocity $\dot{\theta}$ as 
state variables. The input of the system is the torquing force $u\cdot F$, that can influence 
the velocity in both directions. Here, the variable $u$ models the direction and the 
constant $F$ models the intensity of the force. 
Differently from~\cite{KB94}, we consider the problem of finding a discrete controller, whose decisions may be  only ``apply the force clockwise'' ($u=1$), ``apply the force counterclockwise'' ($u=-1$)'', or ``do nothing'' ($u=0$). 
The behaviour of the system depends on the pendulum mass $m$, the length of the 
pendulum $l$, and the gravitational acceleration $g$. Given such parameters, 
the motion of the system is described by the differential equation 
$\ddot{\theta} = \dfrac{g}{l} \sin \theta + \dfrac{1}{m l^2} u F$.
In order to obtain a state space representation, we consider the following normalized system,  
where $x_1$ is the angle $\theta$ and $x_2$ is the angular speed $\dot{\theta}$:
\begin{equation}
  \label{eq:pendmotion2}
\left\{\begin{array}{l}
    \dot{x}_1 = x_2 \\
    \dot{x}_2 = \dfrac{g}{l} \sin x_1 + \dfrac{1}{m l^2} u F
  \end{array}
\right.
\end{equation}
The discrete time model obtained from the equations in (\ref{eq:pendmotion2}) 
by introducing a constant $T$ that models the sampling time is: 
\[(x'_1 = x_1 + T x_2)\,\land\,(x'_2 = x_2 + T {g \over l} \sin x_1 + T {1 \over m l^2} u F)
\]
that is not linear, 
as it contains the function $\sin x_1$. 
A linear model can be found by under- and over-approximating the non linear function 
$\sin x$. 
In our experiments (Sect.~\ref{sec:expres}), we will consider the linear model 
obtained as follows. 

First of all, in order to exploit sinus periodicity, we consider the equation $x_1 = 2 \pi y_{k} +
y_{\alpha}$, where $y_{k}$ represents the period in which $x_1$ lies and
$y_{\alpha} \in [-\pi, \pi]$\footnote{In this section we write $\pi$ for a rational approximation of it.} represents the actual $x_1$ inside a given period.
Then, we partition the interval $[-\pi, \pi]$ in four intervals: 
$I_1$ $=$ $\left[-\pi, -\dfrac{\pi}{2}\right]$, 
$I_2$ $=$ $\left[-\dfrac{\pi}{2},0\right]$, $I_3$ $=$ $\left[0, \dfrac{\pi}{2}\right]$, 
$I_4$ $=$ $\left[\dfrac{\pi}{2},\pi\right]$. 
In each interval $I_i$ ($i\in[4]$), we consider two linear functions $f_i^+(x)$ and 
and $f_i^{-}(x)$, such that for all $x\in I_i$, we have that $f_i^-(x)\leq \sin x\leq f_i^+(x)$.
As an example, $f_1^{+}(y_{\alpha})$ $=$ $-0.637 y_{\alpha} - 2$ and 
$f_1^{-}(y_{\alpha})$ $=$ $-0.707 y_{\alpha} - 2.373$. 

Let us consider the set of fresh continuous variables $Y^r$ $=$ $\{y_\alpha,$ $y_{\sin}\}$ 
and the set of fresh discrete variables $Y^d=\{y_k, y_q,$ $y_1, y_2,$ $y_3, y_4\}$, with $y_1, \ldots, y_4$ being boolean variables. 
The DTLHS model ${\cal I}_F$ for the inverted pendulum is the tuple $(X,U,$ $Y,N)$, 
where $X=\{x_1, x_2\}$ is the set of continuous state variables, 
$U=\{u\}$ is the set of input variables, $Y=Y^r\,\cup\, Y^d$ is the set of auxiliary variables, and the transition 
relation $N(X,U,Y,X')$ is the following predicate: 
\[
\begin{array}{l}
(x'_1 = x_1 + 2\pi y_q + T x_2)	\,\land\, (x'_2 = x_2 + T\dfrac{g}{l} y_{\sin} + T\dfrac{1}{m l^2} u F)
\\[\medskipamount]
\hspace*{.3cm} \land \,\bigwedge_{i\in[4]} y_i\rightarrow f_i^-(y_\alpha)\leq y_{\sin}\leq f_i^+(y_\alpha)
\\[\medskipamount]
\hspace*{.3cm}\land \, \bigwedge_{i\in[4]} y_i\rightarrow y_\alpha \in I_i 
\land \sum_{i\in[4]} y_i \geq 1
\\[\medskipamount]
\hspace*{.3cm}\land \; x_1=2\pi y_k + y_\alpha \;\land\; -\pi \leq x_1' \leq \pi
\end{array}
\]
Overapproximations of the system behaviour increase system nondeterminism.
Since ${\cal I}_F$ dynamics overapproximates the dynamics of the non-linear model,  
the controllers that we synthesize are inherently {\em robust}, that is they meet 
the given closed loop requirements 
{\em notwithstanding} nondeterministic small \emph{disturbances} such as
variations in the plant parameters.
Tighter overapproximations of non-linear functions makes finding a controller easier, 
whereas coarser overapproximations makes controllers more robust.

The typical goal for the inverted pendulum 
is to turn the pendulum steady 
to the upright position, starting from any possible initial position, within a given speed interval. 

\subsection{Multi-input Buck DC-DC Converter}
\label{sec:multi-input-buck-model}

\begin{figure}
  \hspace*{-2.7cm}\includegraphics[scale=0.55]{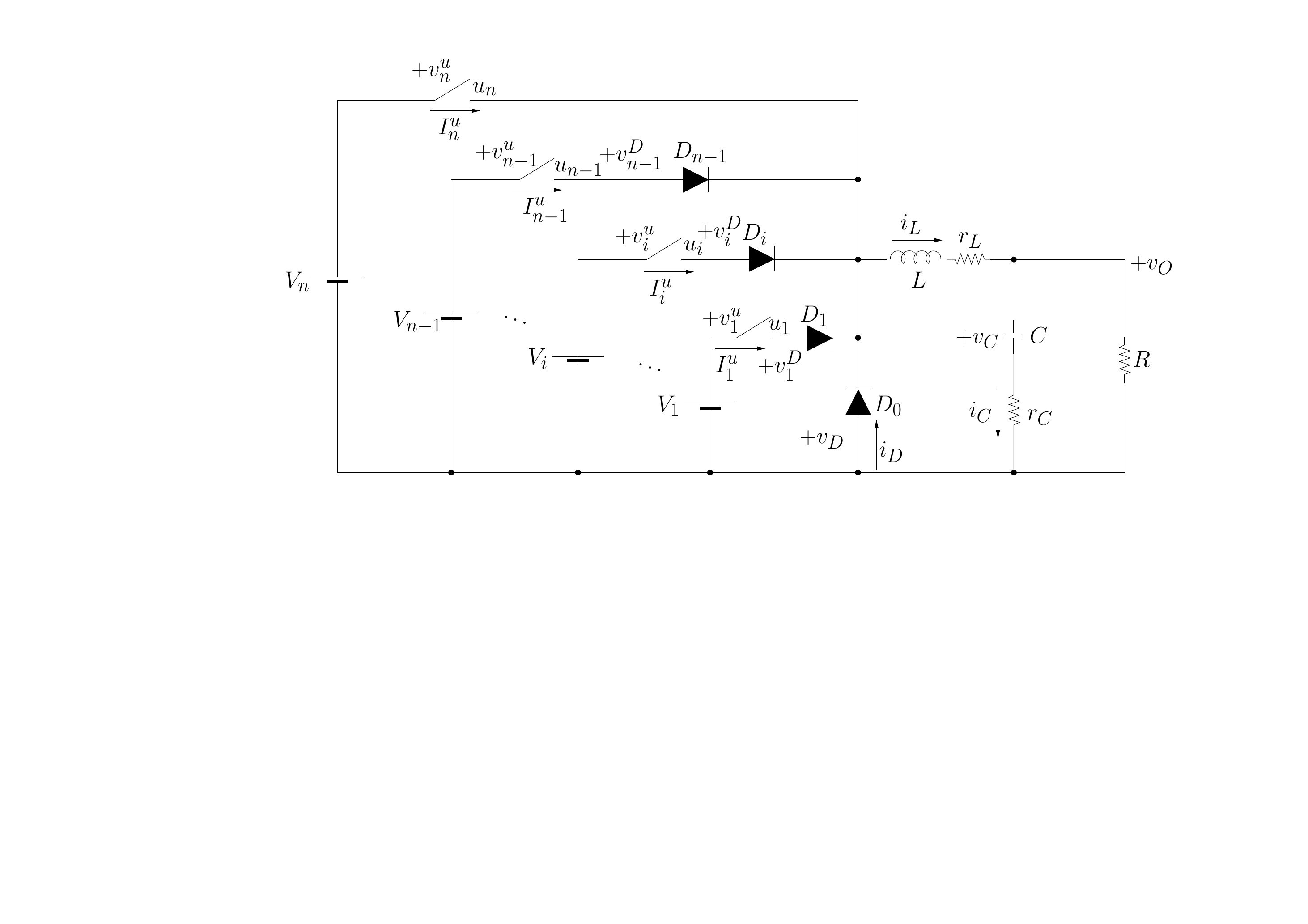}
  \vspace*{-5cm}
  \caption{Multi-input Buck DC-DC Converter.}
  \label{fig:mibdcdc}
\end{figure}

The {\em multi-input} buck DC-DC converter \cite{multin-buck-dcdc-2010} in Fig.~\ref{fig:mibdcdc} is a mixed-mode analog
circuit converting the DC input voltage ($V_i$ in Fig.~
\ref{fig:mibdcdc}) to a desired DC output voltage ($v_O$ in Fig.~
\ref{fig:mibdcdc}).
As an example, buck DC-DC converters are used off-chip to scale down
the typical laptop battery voltage (12-24) to the just few volts
needed by the laptop processor (e.g. \cite{fuzzy-dc-dc-1996}) as well
as on-chip to support \emph{Dynamic Voltage and Frequency Scaling}
(DVFS) in multicore processors
(e.g. \cite{gigascale-integration-07}).
The typical software based approach (e.g. see \cite{fuzzy-dc-dc-1996}) is to control
the switches $u_1, \ldots, u_n$ in Fig.~\ref{fig:mibdcdc} (typically implemented with a
MOSFET) with a microcontroller.

In such a converter 
there are $n$ power supplies with voltage values $V_1, \ldots, V_n$, $n$
switches with voltage values $v_1^u, \ldots, v_n^u$ and current values $I_1^{u},
\ldots, I_n^{u}$, and $n$ input diodes $D_0, \ldots, D_{n - 1}$ with voltage
values $v_0^D, \ldots, v_{n - 1}^D$ and current $i_0^D, \ldots, i_{n - 1}^D$ (in
the following, we will write $v_D$ for $v_0^D$ and $i_D$ for $i_0^D$). 

The circuit state variables are $i_L$ and $v_C$.  However we can
also use the pair $i_L$, $v_O$ as state variables in the DTLHS 
model since there is a linear relationship between $i_L$, $v_C$ and
$v_O$, namely: $v_O \; = \; \frac{r_C R}{r_C + R} i_L + \frac{R}{r_C
  + R} v_C$.
We model the $n$-input buck DC-DC converter with the DTLHS ${\cal B}_n$ = ($X$,
$U$, $Y$, $N$), with $X = [i_L$, $v_O]$, $U = [u_1$, $\ldots$, $u_n]$, $Y =
[v_D$, $v_1^D$, $\ldots, v_{n - 1}^{D}$, $i_D$, 
$q_0$, $\ldots$, $q_{n - 1}$, $I_1^{u}$, $\ldots$, $I_n^{u}$,
$ v_1^u$, $\ldots$, $v_n^u]$. 
From a simple circuit analysis 
we have the following equations:

%
\[
\begin{array}{rcl}
  \dot{i}_L & = & a_{1,1}i_L + a_{1,2} v_{O} + a_{1,3}v_D \\[\medskipamount]
  \dot{v}_{O} & = & a_{2,1}i_L + a_{2,2}v_{O} + a_{2,3}v_D
\\[\smallskipamount] 
\end{array}
\]
where the coefficients $a_{i, j}$ depend on the circuit parameters $R$, $r_L$, $r_C$, $L$ and $C$ in the following way:
$a_{1,1} = -\frac{r_L}{L}$,
$a_{1,2} = -\frac{1}{L}$,
$a_{1,3} = -\frac{1}{L}$,
$a_{2,1} = \frac{R}{r_c + R}[-\frac{r_c r_L}{L} + \frac{1}{C}]$,
$a_{2,2} = \frac{-1}{r_c + R}[\frac{r_c R}{L} + \frac{1}{C}]$,
$a_{2,3} = -\frac{1}{L}\frac{r_c R}{r_c + R}$.
Using a discrete time model with sampling time $T$ (writing $x'$ for
$x(t+1)$) we have:
\begin{eqnarray*}
  {i'_L} & = & (1 + Ta_{1,1})i_L + Ta_{1,2}v_O + Ta_{1,3}v_D  \label{buck:next-il} \\
  {v'_O} & = & Ta_{2,1}i_L + (1 + Ta_{2,2})v_O + Ta_{2,3}v_D. \label{buck:next-vc}
\end{eqnarray*}
The algebraic constraints stemming from the constitutive equations of
the switching elements
are the following:

{\small
\[
\begin{array}{ccc}
q_0  \rightarrow  (v_D = R_{\rm on} i_D )
\!&\!
\bar{q}_0  \rightarrow   (v_D = R_{\rm off}i_D)
\!&\!
v_D   =  v_n^u - V_n
\\[\medskipamount]

q_0   \rightarrow  (i_D \geq 0)
\!&\!
\bar{q}_0  \rightarrow  (v_D \leq 0)
\!&\!
i_L =  i_D + \sum_{i=1}^{n}I_i^{u}
\\[\medskipamount]

\multicolumn{3}{c}{
\begin{array}{lr}
\bigwedge_{i \in [n - 1]} q_i  \rightarrow  (v_i^D = R_{\rm on} I_i^u)
~&~
\bigwedge_{i \in [n - 1]} \bar{q}_i  \rightarrow   (v_i^D = R_{\rm off}I_i^u)
\end{array}
}
\\[\medskipamount]

\multicolumn{3}{c}{
\begin{array}{lr}
\bigwedge_{i \in [n - 1]} q_i  \rightarrow  (I_i^u \geq 0)
~&~
\bigwedge_{i \in [n - 1]} \bar{q}_i  \rightarrow  (v_i^D \leq 0)
\end{array}
}
\\[\medskipamount]

\multicolumn{3}{c}{
\begin{array}{lr}
\bigwedge_{j \in [n]} u_j \rightarrow (v_j^u = R_{\rm on} I_j^u)
~&~
\bigwedge_{j \in [n]} \bar{u}_j  \rightarrow  (v_j^u = R_{\rm off} I_j^u)
\end{array}
}
\\[\medskipamount]

\multicolumn{3}{c}{
\bigwedge_{i \in [n]} v_D   =  v_i^u + v_{i}^{D} - V_i 
}
%
\end{array}
\]
}

\subsection{Experimental Settings}
\label{expres.tex.setting}

All experiments have been carried out on an Intel(R) Xeon(R) CPU @ 2.27GHz, 
with 23GiB of RAM, Kernel: Linux 2.6.32-5-686-bigmem, distribution Debian
GNU/Linux 6.0.3 (squeeze).

As in~\cite{KB94}, we
set pendulum parameters $l$ and $m$ in such a way that  $\frac{g}{l}=1$ (i.e. $l=g$) and
$\frac{1}{ml^2}=1$ (i.e. $m=\frac{1}{l^2}$). As for the quantization, we set $A_{x_1} =
[-1.1\pi, 1.1\pi]$ and $A_{x_2} = [-4, 4]$, and we define $A_{{\cal I}_F} = A_{x_1} \times
A_{x_2} \times A_u$. The goal region is defined by the predicate 
$G_{{\cal I}_F}(X)\equiv (-\rho\leq x_1\leq\rho)\,\land\,(-\rho\leq x_2\leq\rho)$, where $\rho\in\{0.05, 0.1\}$, and the initial 
region is defined by the predicate $I_{{\cal I}_F}(X)\equiv(-\pi\leq x_1\leq \pi)\,\land\,(-4\leq x_2\leq 4$). 

In the multi-input buck DC-DC converter with $n$ inputs ${\cal B}_n$, we set constant parameters as follows:
$L = 2 \cdot 10^{-4}$ H, $r_L = 0.1$ ${\rm \Omega}$, $r_C = 0.1$ ${\rm
\Omega}$,  $R = 5$ ${\rm \Omega}$,  $C = 5 \cdot 10^{-5}$ F,   $R_{\rm on} = 0$
${\rm \Omega}$, $R_{\rm off} = 10^4$ ${\rm \Omega}$, and $V_i = 10i$ V for $i\in
[n]$. As for the quantization, we set $A_{i_L} = [-4, 4]$ and $A_{v_O}
= [-1, 7]$, and we define $A_{{\cal B}_n} = A_{i_L} \times A_{v_O} \times A_{u_1} \times
\ldots \times A_{u_n}$. The goal
region is defined by the predicate $G_{{\cal B}_n}(X)\equiv (-2\leq
i_L\leq 2)\,\land\,(5-\rho\leq v_O\leq5+\rho)$, where $\rho = 0.01$, and
the initial  region is defined by the predicate $I_{{\cal B}_n}(X)\equiv(-2\leq i_L\leq
2)\,\land\,(0\leq v_O\leq 6.5$). 

In both examples, we use uniform quantization 
functions dividing the domain of each state variable $x$ 
into $2^b$ equal intervals, where $b$ is the number of
bits used by AD conversion. The resulting quantizations are ${\cal Q}_{{\cal I}_F,b} = (A_{{\cal I}_F},\Gamma_b)$ and ${\cal Q}_{{\cal B}_n,b} = (A_{{\cal B}_n},\Gamma_b)$. 
Since 
in both examples have two quantized variables, 
each one with $b$ bits,  the number of quantized (abstract) states 
is exactly $2^{2b}$.   
%
%
%
%
%
%
%

We run \qks\ and \qkssc\ on the inverted pendulum model ${\cal I}_F$ 
for different values of $F$ (force intensity), 
and on the multi-input buck DC-DC model ${\cal B}_n$, 
for different values of parameter $n$ (number of the switches). 
For the inverted pendulum, we use sampling time $T=0.1$ seconds when the quantization schema has 
less than $10$ bits and $T=0.01$ seconds otherwise.
For the multi-input buck, we set $T=10^{-6}$ seconds.
For both systems, we run experiments with different quantization schema. 

For all of these experiments, \qks\ and \qkssc\ output a control software 
in C language. In the following, we will denote with $K^{\sf mgo}$ the output of \qks, 
and with $K^{\sf sc}$ the output of \qkssc\ on the same control problem.


\subsection{Experiments Discussion}
\label{sec:exp-disc}
We compare the controller $K^{\sf mgo}$ and $K^{\sf sc}$ by evaluating their 
size, as well as other non-functional requirements such as the set-up time and the  
ripple of the closed loop system. Tables~\ref{tab:mibdcdc} and~\ref{tab:pendulum} 
summarize our experimental results.  

\begin{figure*}
  \centering
  \hspace*{-2mm}
  \begin{tabular}{ccc}
\begin{minipage}{0.332\textwidth}
  \includegraphics[width=\columnwidth]{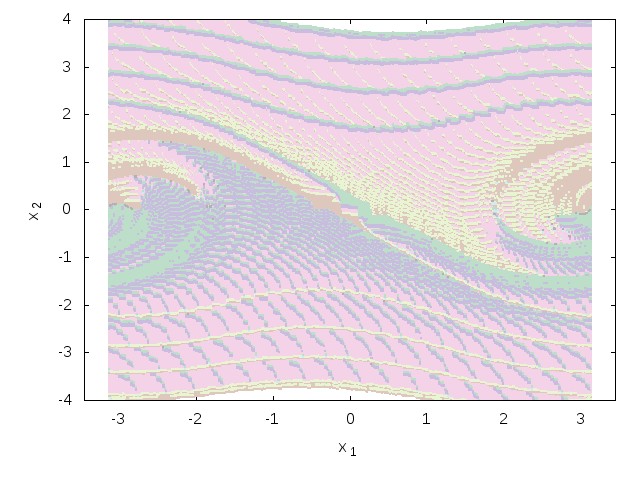}
  \caption{$K^{\sf mgo}$ enabled actions $\!({\cal I}_{0.5},\! b\!=\!\!9$)}
  \label{fig:cr-mgo-10-05}
\end{minipage}
&
\hspace{-1.75mm}
  \begin{minipage}{0.332\textwidth}
  \includegraphics[width=\columnwidth]{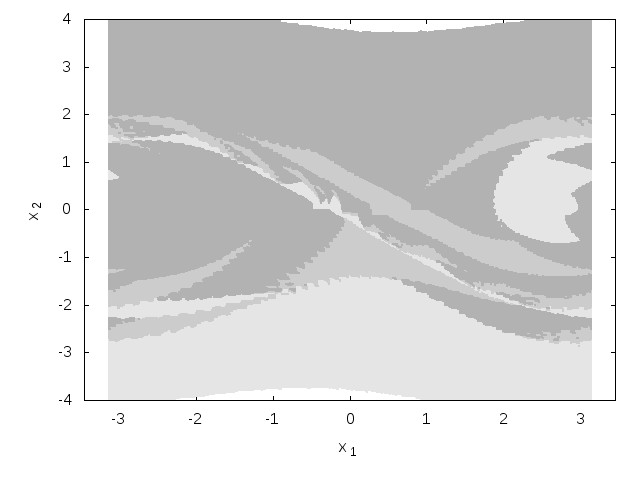}
  \caption{$K^{\sf sc}$ enabled actions $\!({\cal I}_{0.5}, \!b\!=\!9$)}
  \label{fig:cr-sc-10-05}
  \end{minipage}
&
\hspace{-5mm}
\begin{minipage}{0.332\textwidth}
  \includegraphics[width=\columnwidth]{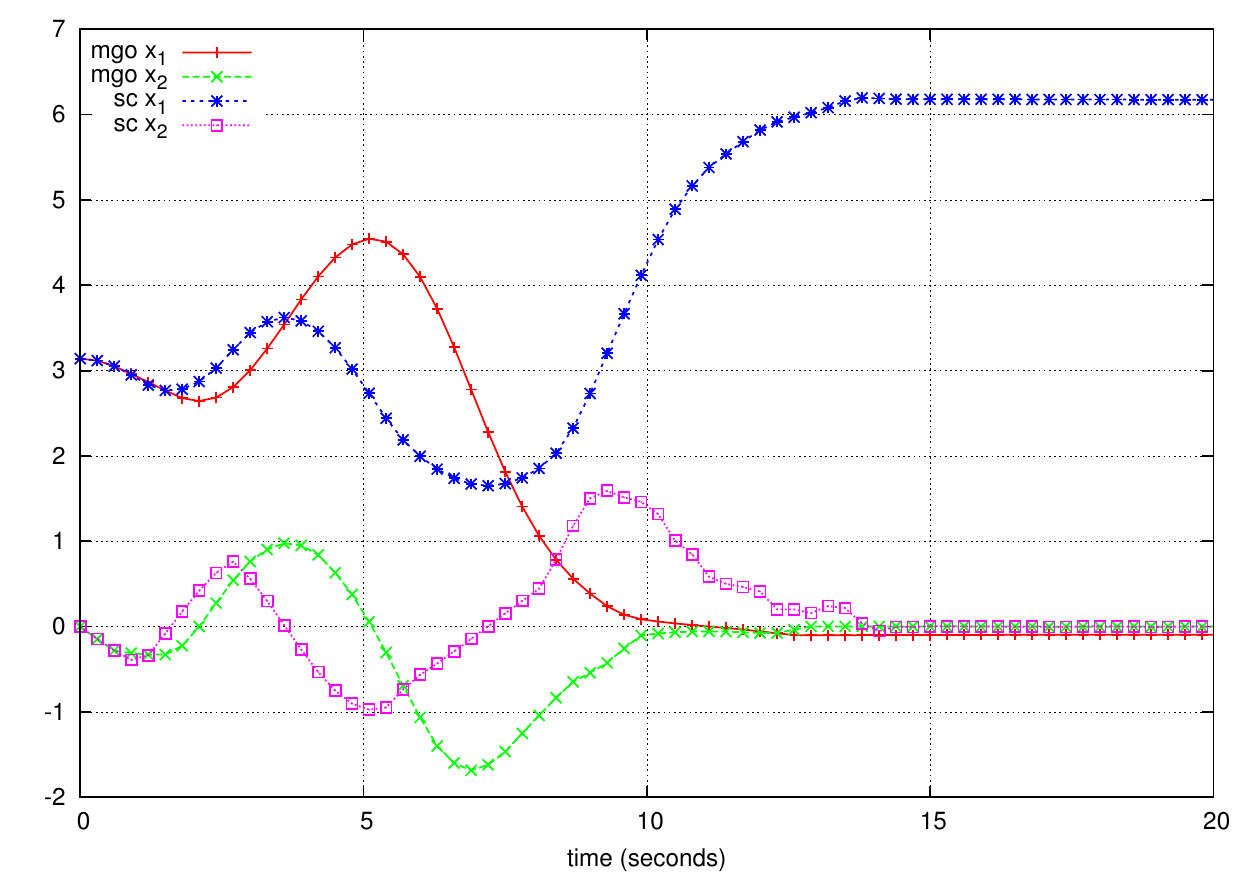}
  \caption{Simulation of ${\cal I}_{0.5}^{K^{\sf sc}}$, ${\cal
  I}_{0.5}^{K^{\sf mgo}}\! (b\!=\!9$)}
  \label{fig:simul-setup}
\end{minipage}
\end{tabular}
\end{figure*}

In both tables, column $|K^{\sf mgo}|$ (resp. $|K^{\sf sc}|$) 
shows the size (in Kbytes) of the {\small {\tt .o}} file 
obtained by compiling the output of \qks\ (resp. \qkssc) 
with {\small {\tt gcc}}. Column $\frac{|K^{\sf sc}|}{|K^{\sf mgo}|}$ shows the ratio 
between the size of the two controllers and it illustrates how much one gains in terms of 
code size by using function \fun{smallCtr} instead of \fun{mgoCtr}. 

Column Path$^{\sf mgo}$ (resp. Path$^{\sf sc}$) shows the average length of 
(worst case) paths to the goal region in the closed loop abstract systems 
$\hat{\cal H}^{(K^{\sf mgo})}$ (resp. $\hat{\cal H}^{(K^{\sf sc})}$). This number, 
multiplied by the sampling time, provides a pessimistic estimation of the average set-up 
time of the closed loop system. 
Column $\frac{{\rm Path}^{\sf sc}}{{\rm Path}^{\sf mgo}}$ shows the ratio between 
the values in the two previous columns, and it provides an estimation 
of the price one has to pay (in terms of optimality) by using a small controller 
instead of the mgo controller.
 
The last three columns show the computation time of function \fun{smallCtr} (column 
CPU$^{\sf sc}$, in seconds), the ratio with respect to \fun{mgoCtr} (column 
$\frac{{\rm CPU}^{\sf sc}}{{\rm CPU}^{\sf mgo}}$), and \fun{smallCtr} memory usage (column Mem, in Kbytes). 
The function \fun{smallCtr} is obviously slower than \fun{mgoCtr}, because of 
non-optimality: it performs more loops, and it deals with more complex
computations. 
Keep in mind, however, that the controller synthesis off-line computation is not a critical parameter in the control software synthesis flow. 

As we can see in Tab.~\ref{tab:mibdcdc} and Tab.~\ref{tab:pendulum} the size of the controller 
$K^{\sf sc}$ tends to become smaller and smaller with respect to the size of 
the correspondent controller $K^{\sf mgo}$ as the complexity of the plant model grows. 
This is a general trend, both with respect to the number of switches of the multi-input buck,  
and with respect to the number of bits of the quantization schema (in both examples). 
In particular, in the 12 bits controllers for the inverted pendulum, the size of $K^{\sf sc}$ 
is just about $5\%$ of the size of $K^{\sf mgo}$.

The average worst case length of paths to the goal in the closed loop system 
$\hat{\cal H}^{(K^{\sf sc})}$ 
tends to approach the one in $\hat{\cal H}^{(K^{\sf mgo})}$ as the complexity of 
the system grows. $\hat{\cal H}^{(K^{\sf sc})}$ simulations show an even better 
behaviour since most of the time, the set--up time of $\hat{\cal H}^{(K^{\sf sc})}$  
is about the one of $\hat{\cal H}^{(K^{\sf mgo})}$. 

For example, 
Fig.~\ref{fig:simul-setup} shows a simulation 
of the closed loop systems ${\cal I}_{0.5}^{K^{\sf sc}}$ and ${\cal
I}_{0.5}^{K^{\sf mgo}}$. It considers a quantization schema of $9$ bits with trajectories 
starting from $x_1 = \pi, x_2 = 0$. In order to show pendulum phases, $x_1$ is
not normalized in $[-\pi, \pi]$, thus also $x_1 = 2\pi$ is in the goal. As we can see,
the small controller needs slightly more time (just about a second) to reach
the goal. This behaviour can be explained by observing that the average worst case path length is 
a very pessimistic measure. Thus, in practice, both controllers stabilize the system much faster 
than one can expect by looking at Path$^{\sf mgo}$ and Path$^{\sf sc}$.
Similarly, the performarce of 
the small controller with respect to the optimal one is much better than one can expect by considering the ratio 
$\frac{{\rm Path}^{\sf sc}}{{\rm Path}^{\sf mgo}}$.   
Interestingly, however, ${\cal I}_{0.5}^{K^{\sf mgo}}$ follows a smarter trajectory, with one less swing. 
  
Fig.~\ref{fig:ripple-mgo} (resp.~Fig.~\ref{fig:ripple-sc}) shows the ripple of $x_1$  
in the inverted pendulum closed loop system ${\cal I}_{0.5}^{K^{\sf mgo}}$ 
(resp. ${\cal I}_{0.5}^{K^{\sf sc}}$), by focusing on the part of the
simulation in Fig.~\ref{fig:simul-setup} which is (almost always) inside the goal. 
As we can see, the small controller yields a worst ripple (0.0002 vs 0.0001),
which may be however neglected in practice. 

To visualize the very different nature of these controllers, Fig.~\ref{fig:cr-mgo-10-05}  
(resp.~Fig.~\ref{fig:cr-sc-10-05}) shows actions that are enabled by $K^{\sf mgo}$ 
(resp. $K^{\sf sc}$) in all states of the admissible region of
the inverted pendulum control problem ${\cal I}_{0.5}$, by considering a quantization schema of $9$ bits. 
In these pictures, different colors mean different actions. We observe that in 
Fig.~\ref{fig:cr-mgo-10-05} we need 7 colors, because in a given state 
$K^{\sf mgo}$ may enable any nonempty subset of the set of actions.
As expected, the control strategy of $K^{\sf sc}$ is much more regular and thus simpler than the one of $K^{\sf mgo}$, since it enables the same action in 
relatively large regions of the state space. Some symmetries of Fig.~\ref{fig:cr-mgo-10-05} 
are broken in Fig.~\ref{fig:cr-sc-10-05} because when more actions could be choosen, 
\fun{smallCtr} gives always priority to one of them 
(Alg.~\ref{alg:smallctr4}, lines~\ref{line:for-start}--\ref{line:for-end}).

\begin{figure}
  \centering
  \begin{tabular}{ccc}
\begin{minipage}{0.42\textwidth}
  \includegraphics[width=\columnwidth]{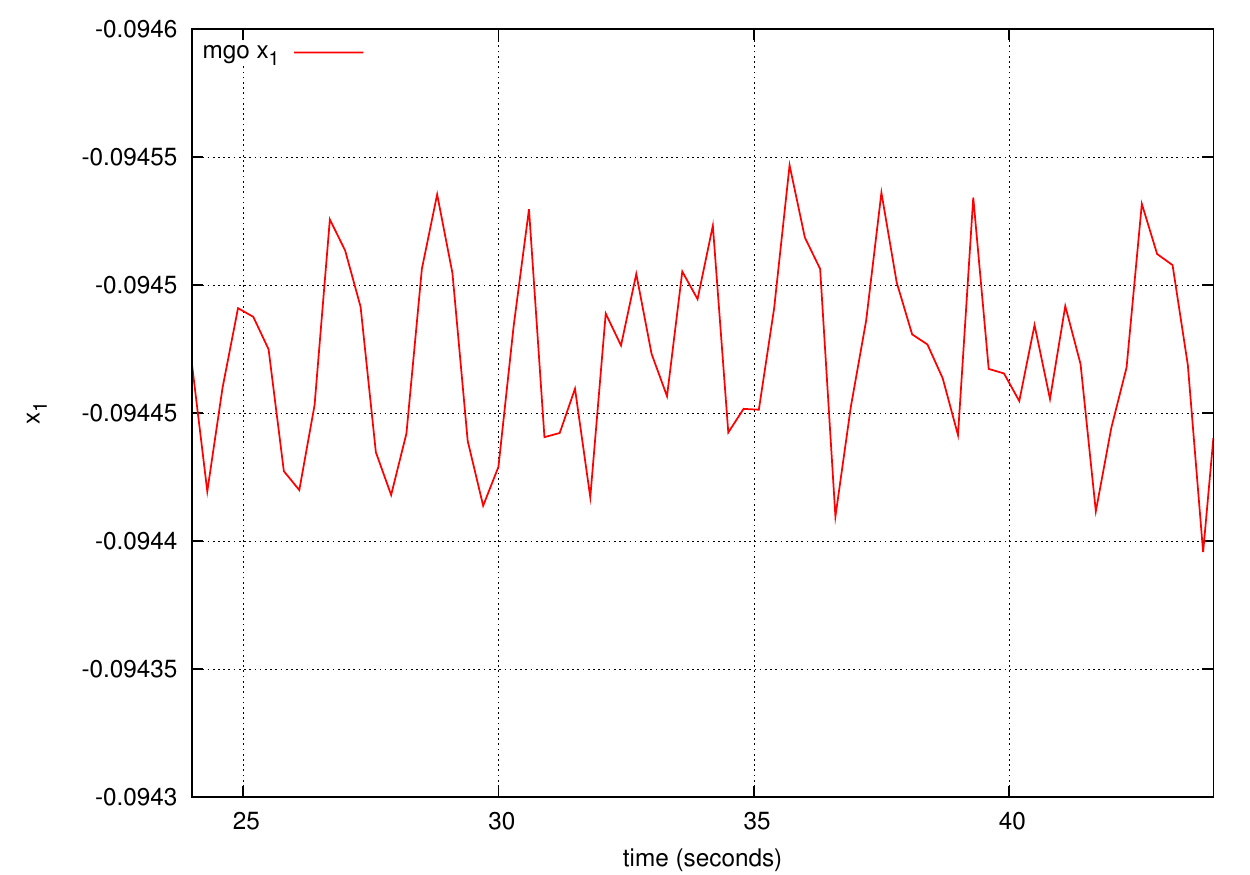}
  \caption{Ripple for $K^{\sf mgo}$ $(b\! =\! 9$)}
  \label{fig:ripple-mgo}
\end{minipage}
&
  \begin{minipage}{0.42\textwidth}
  \centering
  \includegraphics[width=\columnwidth]{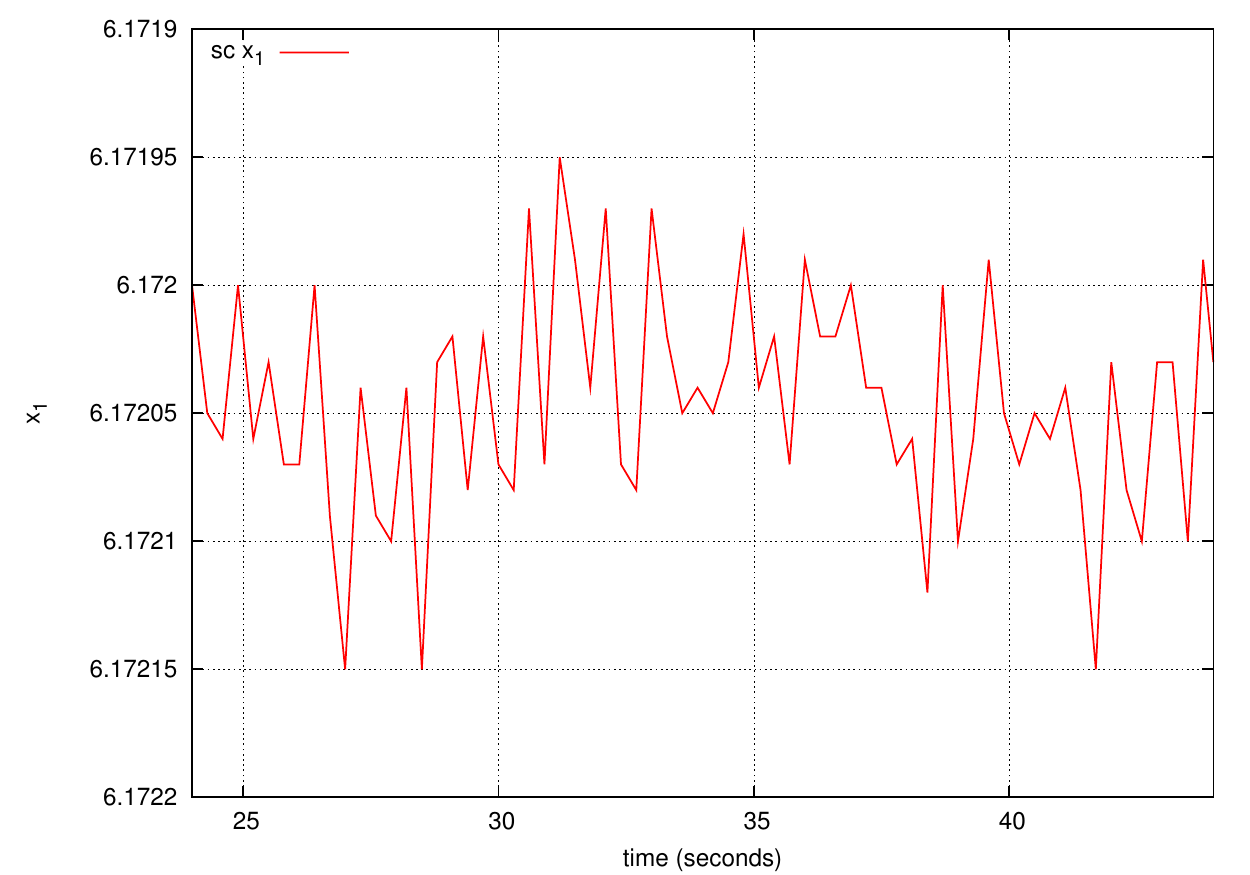}
  \caption{Ripple for $K^{\sf sc}$ ($b = 9$)}
  \label{fig:ripple-sc}
  \end{minipage}
\end{tabular}
\end{figure}


\begin{sidewaystable}
\center
\caption{Results for Multiinput Buck DC-DC Converter}
\vspace{1mm}
\label{tab:mibdcdc}
\begin{tabular}{|*{2}{c|}|*{9}{r|}}
\hline
$b$ &$n$& $|K^{\sf mgo}|$&  $|K^{\sf sc}|$& $\frac{|K^{\sf
sc}|}{|K^{\sf mgo}|}$&  Path$^{\sf mgo}$& Path$^{\sf sc}$&
$\frac{{\rm Path}^{\sf sc}}{{\rm
Path}^{\sf mgo}}$ &CPU$^{\sf sc}$& $\frac{{\rm CPU}^{\sf sc}}{{\rm CPU}^{\rm
mgo}}$&  Mem  \\\hline\hline
9 & 1 & 36 & 30 & 83.9\% & 179.40 & 517.67 & 2.89 & 11.01 & 2.64 & 3.95e+04\\\hline
9 & 2 & 62 & 34 & 56.0\% & 142.19 & 386.70 & 2.72 & 9.15 & 1.59 & 3.71e+04\\\hline
9 & 3 & 110 & 41 & 37.3\% & 131.55 & 353.77 & 2.69 & 15.01 & 1.58 & 5.66e+04\\\hline
9 & 4 & 157 & 42 & 27.3\% & 127.53 & 324.24 & 2.54 & 19.98 & 1.37 & 6.57e+04\\\hline
10 & 1 & 91 & 56 & 61.4\% & 136.85 & 262.83 & 1.92 & 20.43 & 1.62 & 6.41e+04\\\hline
10 & 2 & 149 & 61 & 41.0\% & 110.78 & 231.37 & 2.09 & 23.14 & 1.36 & 6.71e+04\\\hline
10 & 3 & 244 & 65 & 26.9\% & 103.40 & 216.11 & 2.09 & 34.06 & 1.21 & 9.17e+04\\\hline
10 & 4 & 341 & 70 & 20.6\% & 100.43 & 209.47 & 2.09 & 53.70 & 1.18 & 1.23e+05\\\hline
\end{tabular}
\end{sidewaystable}

\begin{sidewaystable}
\center
\caption{Results for the Inverted Pendulum}
\vspace{1mm}
\label{tab:pendulum}
\begin{tabular}{|*{3}{c|}|*{9}{r|}}
\hline
$b$& $F$& $T$& $|K^{\sf mgo}|$&  $|K^{\sf sc}|$& $\frac{|K^{\sf sc}|}{|K^{\sf mgo}|}$&  Path$^{\sf mgo}$& Path$^{\sf sc}$&
$\frac{{\rm Path}^{\sf sc}}{{\rm
Path}^{\sf mgo}}$ &CPU$^{\sf sc}$& $\frac{{\rm CPU}^{\sf sc}}{{\rm CPU}^{\sf 
mgo}}$&  Mem  \\\hline\hline
8 & 0.5 & 0.1 & 163 & 44 & 27.4\% & 132.96 & 234.35 & 1.76 & 16.25 & 2.16 & 4.15e+04\\\hline
9 & 0.5 & 0.1 & 352 & 92 & 26.3\% & 69.64 & 147.74 & 2.12 & 33.59 & 2.12 & 8.47e+04\\\hline
10 & 0.5 & 0.1 & 752 & 206 & 27.5\% & 59.16 & 133.70 & 2.26 & 123.94 & 2.57 & 2.27e+05\\\hline
11 & 0.5 & 0.01 & 2467 & 213 & 8.6\% & 1315.69 & 1898.50 & 1.44 & 798.03 & 2.38 & 1.40e+05\\\hline
12 & 0.5 & 0.01 & 8329 & 439 & 5.3\% & 674.39 & 1280.32 & 1.90 & 2769.08 & 1.07 & 8.82e+05\\\hline
8 & 2.0 & 0.1 & 96 & 31 & 32.8\% & 24.30 & 58.00 & 2.39 & 3.41 & 1.87 & 4.13e+04\\\hline
9 & 2.0 & 0.1 & 185 & 81 & 44.1\% & 22.29 & 40.13 & 1.80 & 9.64 & 1.94 & 8.39e+04\\\hline
10 & 2.0 & 0.1 & 383 & 194 & 50.6\% & 21.91 & 43.24 & 1.97 & 49.26 & 2.13 & 2.25e+05\\\hline
11 & 2.0 & 0.01 & 2204 & 128 & 5.8\% & 230.25 & 437.18 & 1.90 & 198.95 & 2.87 & 1.46e+05\\\hline
12 & 2.0 & 0.01 & 5892 & 300 & 5.1\% & 207.31 & 390.48 & 1.88 & 561.18 & 0.45 & 9.63e+05\\\hline
\end{tabular}
\end{sidewaystable}

\section{Conclusions}

We presented a novel automatic methodology to synthesize  
control software for Discrete Time Linear Hybrid Systems, 
aimed at generating small size control software. 
We proved our methodology to be very effective by showing that we 
synthesize controllers up to 20 times smaller than time optimal ones.
Small controllers keep other software non-functional requirements, such as WCET, 
at the cost of being suboptimal with respect to system level non-functional requirements 
(i.e. set-up time and ripple). 
Such inefficiency may be fully justified since it allows a designer to 
consider much cheaper microcontroller devices.

Future work may consist of further exploiting small controller  
regularities in order to improve on other 
software as well as system level non-functional 
requirements. A more ambitious goal may consist of designing a tool 
that automatically tries to find control software that meets non-functional 
requirements given as input (such as memory, ripple, set-up time).


%
\subsection*{Acknowledgments}
This paper has been published in EMSOFT 2012 proceedings. We thank our anonymous referees for their helpful comments.
This work has been partially supported by the  
MIUR project TRAMP (DM24283) and by the EC FP7 projects ULISSE (GA218815) and SmartHG
 (317761).

  
  \bibliographystyle{plain}
  \bibliography{modelchecking}

\end{document}